\documentclass[useAMS,usenatbib,referee]{biom}

\usepackage{amsmath}

\usepackage{times}
\usepackage{bm}

\usepackage{bm} 

\usepackage{natbib}
\usepackage[margin=1cm]{caption}
\usepackage{color}
\usepackage{xcolor}


\usepackage[algo2e]{algorithm2e} 
\usepackage{defs}
\makeatletter
\renewcommand{\algocf@captiontext}[2]{#1\algocf@typo. \AlCapFnt{}#2} 
\def\@algocf@capt@plain{top}
\renewcommand{\algocf@makecaption}[2]{%
  \addtolength{\hsize}{\algomargin}%
  \sbox\@tempboxa{\algocf@captiontext{#1}{#2}}%
  \ifdim\wd\@tempboxa >\hsize
    \hskip .5\algomargin%
    \parbox[t]{\hsize}{\algocf@captiontext{#1}{#2}}
  \else%
    \global\@minipagefalse%
    \hbox to\hsize{\box\@tempboxa}
  \fi%
  \addtolength{\hsize}{-\algomargin}%
}
\makeatother


\addtolength\topmargin{35pt}

\title{Scalable Gaussian Process Regression via Median Posterior Inference for Estimating the Health Effects of an Environmental Mixture}

\author{Aaron Sonabend-W\\
	   Google Research, 1600 Amphitheatre Parkway, Mountain View, U.S.A.
    \and
	  	   Jiangshan Zhang\\
	   Department of Statistics, University of California, Davis,  One Shields Avenue, Davis, U.S.A.
            \and
            Edgar Castro\\ 
            Department of Environmental Health, Harvard University, 677 Huntington Ave, Boston, U.S.A.
            \and
        Joel Schwartz\\ 
	   Department of Environmental Health, Harvard University, 677 Huntington Ave, Boston, U.S.A.
    \and
     Brent A. Coull\\
	   Department of Biostatistics, Harvard University, 677 Huntington Ave, Boston, U.S.A.
    \and
    Junwei Lu\emailx{junweilu@hsph.harvard.edu}\\
	   Department of Biostatistics, Harvard University, 677 Huntington Ave, Boston, U.S.A.
	   }

\begin{document}




\begin{abstract}
Humans are exposed to complex mixtures of environmental pollutants rather than single chemicals, necessitating methods to quantify the health effects of such mixtures. Research on environmental mixtures provides insights into realistic exposure scenarios, informing regulatory policies that better protect public health. However, statistical challenges, including complex correlations among pollutants and nonlinear multivariate exposure-response relationships, complicate such analyses. A popular Bayesian semi-parametric Gaussian process regression framework \citep{Coull2015} addresses these challenges by modeling exposure-response functions with Gaussian processes and performing feature selection to manage high-dimensional exposures while accounting for confounders. Originally designed for small to moderate-sized cohort studies, this framework does not scale well to massive datasets. To address this, we propose a divide-and-conquer strategy, partitioning data, computing posterior distributions in parallel, and combining results using the generalized median. While we focus on Gaussian process models for environmental mixtures, the proposed distributed computing strategy is broadly applicable to other Bayesian models with computationally prohibitive full-sample Markov Chain Monte Carlo fitting. We apply this method to estimate associations between a mixture of ambient air pollutants and ~650,000 birthweights recorded in Massachusetts during 2001–2012. Our results reveal negative associations between birthweight and traffic pollution markers, including elemental and organic carbon and PM$_{2.5}$, and positive associations with ozone and vegetation greenness.
\end{abstract}

\begin{keywords}
BKMR, Median Posterior, Multi-Pollutant Mixtures, Scalable Bayesian Inference, Semi-parametric regression.
\end{keywords}
\maketitle

\section{Introduction}

Ambient air pollution consists of a heterogeneous mixture of multiple chemical components, with these components being generated by different sources.  Therefore, quantification of the health effects of this mixture can yield important evidence on the source-specific health effects of air pollution, which has the potential to provide evidence to support targeted regulations for ambient pollution levels.  

As is now well-documented, there are several statistical challenges involved in estimating the health effects of an environmental mixture. First, the relationship between health outcomes and multiple pollutants can be complex, potentially involving non-linear and non-additive effects. Second, pollutant levels can be highly correlated, but only some may impact health. Therefore,  models inducing sparsity are often advantageous.  Feature engineering, such as basis expansions to allow interaction terms, can lead to high dimensional inference. Alternatively, parametric models can be used, however they require the analyst to impose a functional form, which can yield biased estimates in the likely case that the model is miss-specified.


Several methods address the issues discussed above \citep{Billionnet2012}. A common approach to modelling the complex relationship between pollutants and outcomes is to use flexible models such as random forests which have been shown to be consistent \citep{ScornetErwan2015Corf}, or universal approximators, such as neural networks \citep{NN2015}. These are useful but yield results which are hard to interpret: one cannot report the directionality or magnitude of the feature effect on the outcome. In this context, our interest lies in both prediction as well as interpretation. Another possible way to incorporate flexible multi-pollutant modelling is by clustering pollution-exposure levels and including clusters as covariates in parametric models. This approach essentially stratifies exposure levels which results in important loss of information. It ultimately forces the analyst to adapt the question of interest into one that can be solved by available tools, instead of tackling the relevant questions. A common approach to address the high-dimensionality of multi-pollutants effects is to posit a generalized additive model. This allows one to estimate the association between a health outcome and a single pollutant, which can be repeated for every exposure of interest \citep{Stieb2012}. Flexible modelling such as quantile regression can be employed to deal with outliers and account for possible differences in associations across the health outcome \citep{Fong2019}. However, the clear downside is that incorporating multi-pollutant mixtures quickly makes this approach computationally infeasible. Alternatively, other parametric models \citep{GaskinsAudreyJ2019SFat,ijerph19031378,YU2022119356} have be used to evaluate the associations of interest, with the downside of imposing a functional form. To enforce sparsity on the feature space, variable selection methods such as least absolute shrinkage and selection operator (LASSO) penalty can be used \citep{Tibshirani1996}, however to use such methods one must specify a parametric model which brings us back to the likely misspecification scenario, in which estimated associations and causal effects may be biased.

A popular approach to simultaneously addressing these issues on small-scale data is the use of a semi-parametric Gaussian process model, often referred to as Bayesian kernel machine regression (BKMR)\citep{Bobb2015, Coull2015}. The pollutants-health outcome relationship is modelled through a Gaussian process, which allows for a flexible functional relationship between the pollutants and the outcome of interest. The model allows for feature selection among the pollutants to discard those with no estimable health effect and to account for high correlation among those with and without a health effect. This framework allows the incorporation of linear effects of baseline covariates, yielding an interpretable model.

Even though this framework is frequently employed in the multi-pollutant context, large datasets make it prohibitively slow as it involves Bayesian posterior calculation. To address this scalability challenge, we propose a divide-and-conquer approach in which we split samples, compute the posterior distribution, and then combine the smaller samples using the generalized median. This method allows capturing small effects from large datasets in little time. Our distributed algorithm is based on aggregating the median of the posteriors computed in parallel on the distributed datasets. Such a strategy is not only applicable to Gaussian process regression but also can be applied to a wider range of Bayesian methods. We provide theoretical guarantees for the convergence of the core Gaussian process component of the model, flexible to different function spaces. We then apply this scalable method to a challenging, large-scale dataset of 650,000 birthweights in Massachusetts to quantify the health effects of a mixture of ambient air pollutants.
\section{Method}

\subsection{Semi-parametric Regression}
Suppose we observe a sample of $n$ independent, identically distributed (i.i.d.) random vectors $\mathbb{S}_n=\{D_i\}_{i=1}^n$, where $D_i=(Y_i,\bX_i,\bZ_i )\sim P_0$ with $\bX_i\in\mathcal{X}\subset\mathbb{R}^p$ a vector of possible confounders, and $\bZ_i\in\mathcal{Z}\subset\mathbb{R}^q$ a vector of environmental exposure levels. We will assume health outcome $Y$ has a linear relationship with confounders $\bX$ and a non-parametric relationship with exposure vector $\bZ$. In particular, for $D_i$ we assume the following semi-parametric relationship:
\begin{align}\label{eq: model}
Y_i=\bX_i^\top\bbeta^0+h_0(\bZ_i)+e_i, 
\end{align}
where $e_i\sim\mathcal{N}(0,\sigma^2)$, and $h_0:\mathcal{Z}\mapsto\mathbb{R}$ is an unknown function which we allow to incorporate non-linearity and interaction among the pollutants. We require $h_0$ to be in an $\alpha$-H\:{o}lder space or to be infinitely differentiable. We formalize this in Section \ref{Section: theory}.

\subsection{Prior Specification}
To perform inference on $h_0$, we will use a re-scaled Gaussian process prior \citep{GPfM}. In particular, we will use a squared exponential process equipped with an inverse Gamma bandwidth. That is, we will use prior 
\[
h_0(\bZ)\sim\mathcal{N}(0,\bK),
\]
where Cov$[\bZ,\bZ']=K(\bZ,\bZ';\rho)=\exp\left\{-\frac{1}{\rho^{2}}\|\bZ-\bZ'\|_2^2\right\}$, and $\frac{1}{\rho^q}$ is a Gamma distributed random variable. We choose this kernel as it is flexible enough to approximate smooth functions, more so when the bandwidth parameter $\rho$ can be estimated from the data.

Alternatively, we can also augment the Gaussian kernel to allow for sparse solutions on the number of pollutants that contribute to the outcome \citep{Bobb2015}. Let the augmented co-variance function be Cov$[\bZ,\bZ']=K(\bZ,\bZ';\br)=\exp \{-\sum_{j=1}^{q}r_j(Z_j-Z_j')^2 \}$. To select pollutants we assume a "slab-and-spike" prior on the selection variables $r_j\sim g_{r\mid\eta}$, with
\[
g_{r\mid\eta}(r,\eta)=\eta f_1(r)+(1-\eta)f_0,\:\eta\sim\text{Bernoulli}(\pi),
\]
where $f_1$ has support on $\mathbb{R}^+$ and $f_0$ is the point mass at 0. The random variables $\eta_j \sim\text{Bernoulli}(\pi_j)$ can then be interpreted as indicators of whether exposure $Z_j$ is associated with the health outcome, 
and the variable importance for each exposure given the data is reflected by the posterior probability that $\eta_j = 1$.
Finally, for simplicity, we will assume an improper prior on the linear component: $\bbeta\sim1$. This linear component will capture the effects of confounders. We further use a Gamma prior distribution for the error term variance $\sigma^2$. 

\subsection{Estimation}\label{section: estimation}
Let $\bh = (h_0(\bZ_1),\dots,h_0(\bZ_n))^\top$, \citet{Liu2007} have shown that model \eqref{eq: model} can be expressed as 
\[
Y_i \sim \mathcal{N}(h_0(\bZ_i)+\bX_i^{T}\bbeta^0,\sigma^2),\:
\bh\sim\mathcal{N}(0,\tau \bK).
\]
This will allow us to simplify our inference procedure and split the problem into tractable posterior estimation \citep{Bobb2015} for each component of interest. In particular, we can use Gibbs steps to sample the conditionals for $\bbeta$, $\sigma^2$ and $\bh$ analytically. Letting $\lambda=\frac{\tau}{\sigma^2}$, we use a Metropolis-Hastings step, the full set of posteriors is given in equation \eqref{eq: posteriors}.
\begingroup
\allowdisplaybreaks
\begin{align}\label{eq: posteriors}
\bbeta\mid\sigma^2,\lambda,\br,\bY
&\sim N\left(\bV_{\bbeta} \bX^\top \bV^{-1}_{\lambda,\bZ,\br}\bY,\sigma^2\bV_{\bbeta}\right),
\notag\\
\sigma^2\mid\bbeta,\lambda,\br,\bY
&\sim \text{Gamma}\left(\alpha_{\sigma}+\frac{n}{2},b_{\sigma}+\frac{1}{2}WSS_{\bbeta,\lambda,\br}\right),\notag\\
{\bh}\mid\bbeta,\sigma^2,\lambda,{\br, \bY,\bX,\bZ}
&\sim N\left(\lambda \bK_{\bZ,\br}\bV^{-1}_{\lambda,{ \bZ,\br}}(\bY-\bX\bbeta),\sigma^2\lambda \bK_{\bZ,\br}\bV^{-1}_{\lambda,{ \bZ,\br}}\right),\\
 f\left(\mathbf{r}, \boldsymbol{\eta}, \mid \boldsymbol{\beta}, \sigma^2, \lambda, \mathbf{y}\right) & \propto \Gamma\left(\sum_j \eta_j+a_{\pi}\right) \Gamma\left(q-\sum_j \eta_j+b_{\pi}\right)\left\{\prod_{j=1}^q f\left(r_j \mid \eta_j\right)\right\},\notag\\
f(\lambda\mid\bbeta,{\br,\eta, \bY,\bX,\bZ})&\propto \left|\bV^{-1}_{\lambda,{\bf Z,r}}\right|^{-1/2}\exp\left\{-\frac{1}{2\sigma^2}WSS_{\bbeta,\lambda,\br}\right\}\text{Gamma}(\lambda\mid a_{\lambda},b_{\lambda})
,\notag
\end{align}
\endgroup
where $\bV_{\lambda,{\bZ,\br}}={\bf I}_n+\lambda{\bK}_{\bZ,\br}, {\bV_{\bbeta}}=(\bX^\top \bV^{-1}_{\lambda,{\bf Z,r}}\bX)^{-1}, WSS_{\bbeta,\lambda,{\bf r}}=(\bY-\bX\bbeta)^\top \bV^{-1}_{\lambda,{\bf Z,r}}(\bY-\bX\bbeta).$ 

To perform inference for functions of interest in \eqref{eq: posteriors}, we will use Markov Chain Monte Carlo (MCMC) techniques. Furthermore, even though function $\bh$ has a closed-form posterior, large samples will require large matrix inversions. 

Posterior sampling can be challenging for this model. This is particularly true for sampling $\bh$, as the posterior Gaussian process is $n$-dimensional. 
First, in practice the number of samples for the burn-in state required from the true posterior significantly increases with dimension.  Second, the problem worsens as the sample size grows.  The computational cost for each iteration is $\mathcal{O}(n^3)$, since to sample from the posterior of $\bh$ we need to compute an inverse of an $n\times n$ kernel matrix indicated in \eqref{eq: posteriors}. This renders the method prohibitively slow for real applications on large data sets. On the other hand  these are precisely the data sets needed as they can actually shed light on the small effects of environmental mixtures on health outcomes. This predicament motivates the development of a computationally fast version of inference for \eqref{eq: posteriors}, and particularly for $\bh$.

\subsection{Fast Inference on Posteriors via Sub-sampling}

In order to make posterior sampling computationally feasible, we propose a sample splitting technique which is guaranteed to satisfy the needed theoretical properties. Our approach consists of computing multiple \textit{noisy} versions of the posteriors we are interested in, and using the median of these as a proxy for the full data posterior. 

First, we randomly split the entire data set into several disjoint subsets with roughly equal sample size without replacement. Let $\mathbb{S}_{k=1}^K$ denote a random partition of $\mathbb{S}_n$ into $K$ disjoint subsets of size $n_k={n}/{K}$ with index sets $\{\mathcal{I}_k\}_{k=1}^K$. Then for each subset $\mathbb{S}_{k}$, we run a modified version of the estimation approach described in Section \ref{section: estimation} using sub-sampling sketching matrices $S_k \in \mathbb{R}^{n\times n_k}$. This will yield $K$ posterior distributions for each parameter and function in \eqref{eq: posteriors}.

We define the $n \times n_k$ sketching matrix $S_k$ with its $i$th column $S_{k,i}=\sqrt{K} \cdot  p_i$, where $p_i$ is uniformly sampled from the columns of the identity matrix. Using $S_k$ we denote by $\tilde\bV_k$ and $\tilde\bA_k$, any vector and matrix transformation respectively as $\tilde\bV_k=S_k^\top\bV$, $\tilde\bA_k=S_k^\top\bA S_k$. In specific, $\tilde \bh_k =  S_k^\top\bh, \tilde \bY_k =  S_k^\top\bY, \tilde \bX_k =  S_k^\top\bX$, and $\tilde \bK_k =  S_k^\top\bK S_k$. We can then redefine model \eqref{eq: model} for sample $\mathbb{S}_k$ as
\begin{align}\label{modif}
\begin{split}
&\tilde{\bY}_k\sim N(\tilde{\bh}_k+\tilde{\bX}_k\bbeta,\sigma^2S_k^\top S_k),\;\; \tilde{\bh}_k\sim N({\bf 0},\tau \tilde{\bK}_k).
\end{split}
\end{align}
We then implement our inference from Section \ref{section: estimation} to the above by using
 $\tilde \bV_{\bbeta}^{(k)}=(\tilde \bX_k^\top(\tilde \bV^{(k)}_{\lambda,\bZ,\br})^{-1}\tilde \bX_k)^{-1},$ $\tilde \bV_{\lambda,\bZ,\br}^{(k)} = S_k^\top S_k + \lambda \tilde{\bK}_k,
$
$\tilde{WSS}_{\bbeta,\lambda,\br}^{(k)}=(\tilde \bY_k-\tilde \bX_k\beta)^\top(\tilde \bV^{(k)}_{\lambda, \bZ,\br})^{-1}(\tilde \bY_k-\tilde \bX_k\bbeta)
$
in \eqref{eq: posteriors} and sample from each of the $K$ posteriors.

Let $\Theta$ be a parameter space and let $\{P_\theta|\theta \in \Theta\}$ be a set of probability measures on $\mathbb{R}^q$ which is indexed by $\Theta$, and which are absolutely continuous with respect to the Lebesgue measure $\mu$. For any i.i.d. random vectors $D_1,\dots,D_n\sim P_0$, let $P_0\equiv P_{\theta_0}$. Bayesian inference usually consists of specifying a prior distribution $\Pi$ for $\theta$, and using sample $\mathbb{S}_n$ to compute a posterior distribution for $\theta$ defined as 
\[
\Pi_n(\theta\mid\mathbb{S}_n)\equiv\frac{\prod_{i=1}^np_\theta(D_i)\Pi(\theta)}{\int_\Theta\prod_{i=1}^np_\theta(D_i)\Pi(\theta)},
\]

\noindent where $p_\theta d\mu = dP_\theta$.

Note that this definition is general enough that $\theta$ can be any parameter in \eqref{eq: posteriors} as well as function $h_0$, in which case prior $\Pi(\theta)$ is a Gaussian process. 
Thus, we compute $\Pi_k(\theta\mid\mathcal{I}_k)$  for each split $k=1,\dots,K$ and each parameter of interest. 
Naturally, posterior $\Pi_k(\theta\mid\mathcal{I}_k)$ will be a noisy approximation of $\Pi_n(\theta\mid\mathbb{S}_n)$. We denote $\Pi_k = \Pi_k(\theta\mid\mathcal{I}_k)$ for notation simplicity. We combine each $\Pi_k$ using the geometric median. This aggregation framework is general and not restricted to parametric models; it can be applied to probability measures on abstract spaces, including non-parametric function spaces. To formalize this, we first define the geometric median, which is a multi-dimension generalization of the univariate median \citep{Minsker2015}.

To construct the geometric median posterior, first define $\psi$ to be the Hellinger metric on $\Theta$ such that, for $\theta_1, \theta_2 \in \Theta$

\[
\psi(\theta_1, \theta_2) = \sqrt{\frac{1}{2}\int_{\mathbb{R}^q}\big(\sqrt{p_{\theta_1}(x)} - \sqrt{p_{\theta_2}(x)}\big)^2 d\mu(x)} 
\]

\noindent and let $(\Theta, \psi)$ be a separable metric space. Now, let $(\mathbb{H}, \langle\cdot, \cdot\rangle_\mathbb{H})$ be the Reproducing Kernel Hilbert Space of functions $f:\Theta \rightarrow \mathbb{R}$ with characteristic reproducing kernel $\kappa: \Theta \times \Theta \rightarrow \mathbb{R}$ defined as the $L^2$ inner product of $p_{\theta_1}$ and $p_{\theta_2}$ with respect to $\mu$. Then, let $\mathbb{H}_1$ be the defined as the unit ball in $\mathbb{H},$ that is, $\mathbb{H}_1 = \{f\in \mathbb{H}| \ ||f||_\mathbb{H} \leq 1\}$. 

Now, let $\mathcal{P}$ be the space of probability measures over $\Theta$ and denote the space

\[
\mathcal{P}_\kappa = \bigg\{\Pi \in \mathcal{P}\bigg|\int_\mathcal{P}\sqrt{\kappa(x, x)} d\Pi(x) < \infty\bigg\}.
\]

\noindent We define the distance between probability measures $\Pi_1, \Pi_2 \in \mathcal{P}_\kappa$ by 

\[
||\Pi_1 - \Pi_2||_{\mathbb{H}_1} = \bigg|\bigg|\int_\Theta \kappa(x, \cdot) d(\Pi_1 - \Pi_2)(x)\bigg|\bigg|_\mathbb{H}.
\]

\noindent With this notion of distance for priors and posteriors on $\Theta$, we now define the geometric median by 

\begin{equation}\label{barycenter}
    \bar\Pi_n = \argmin_{\Pi \in \mathcal{P}_\kappa} \sum_{k = 1}^K||\Pi - \Pi_k||_{\mathbb{H}_1}.
\end{equation}

It is important to note that this definition operates on a general parameter space $\Theta$, which in our case is a non-parametric function space. Our approach applies this general, non-parametric aggregation theory directly to the posterior measures for the function $h_0$.

As shown in \citet{Minsker2017}, the geometric median $\overline{\Pi}_n$ is robust to outlier observations, meaning that our estimation procedure will be robust to outliers.

The convergence rate will be in terms of $n_k$, however this rate improves geometrically with $K$ with respect to the rate at which the $K$ estimators are weakly concentrated around the true parameter \citep{Minsker2017}.

As the median function $\overline{\Pi}_n$ is generally analytically intractable, an achievable solution is to estimate $\overline{\Pi}_n$ from samples of the subset posteriors. We can approximate the median function by assuming that subset posterior distributions are empirical measures and their atoms can be simulated from the subset posteriors by a sampler \citep{SrivastavaSanvesh2015WSBv}. Let $\{\btheta_{k1},\ldots,\btheta_{kN}\}$ be N samples of parameters $\btheta$ obtained from subset posterior distribution $\Pi_n^{(k)}$. In our method, samples can be directly generated from subsets posteriors by using an MCMC sampler. Then we can approximate the $\Pi_n^{(k)}$ by the empirical measure corresponding with $\{\btheta_{k1},\ldots,\btheta_{kN}\}$, which is defined as:
\begin{align}\label{atom subset}
    \hat{\Pi}_n^{(k)}(\cdotp)=\sum_{i=1}^N \frac{1}{N}\delta_{\btheta_{ki}}(\cdotp),\:(k=1,\dots,K),
\end{align}
where $\delta_{\btheta_{ki}}(\cdotp)$ is the Dirac measure concentrated at $\btheta_{ki}$. In order to approximate the subset posterior accurately, we need to make $N$ large enough. Then the empirical probability measure of median function $\hat\Pi_{n,\mathcal{I}}$ can be approximated by estimating the geometric median of the empirical probability measure of subset posteriors. Using samples from subsets posteriors, the empirical probability measure of the median function is defined as:
\begin{align}\label{atom full}
    \hat{\overline{\Pi}}_{n,\mathcal{I}}(\cdotp)=\sum_{k=1}^K\sum_{i=1}^N a_{ki}\delta_{\btheta_{ki}}(\cdotp),\:0\leq a_{ki}\leq 1,\:\sum_{k=1}^K\sum_{i=1}^Na_{ki}=1
\end{align}
where $a_{ki}$ are unknown weights of atoms.  Here the problem of combining subset posteriors to give a problem measure is switched to estimating $a_{ki}$ in \eqref{atom full} for all the atoms across all subset posteriors. Fortunately, $a_{ki}$ can be estimated by solving the optimization problem in \eqref{barycenter} via kernel trick with posterior distributions restricted to atom forms in \eqref{atom subset} and \eqref{atom full}. There are several different algorithms to solve this, such as \citet{Bose2003-cz} and \citet{Cardot2013-qw}. We use an efficient algorithm developed by \citet{Minsker2017}, which is summarized as its Algorithm 2 via Weiszfeld's algorithm.

\begin{algorithm}
\caption{Fast Posterior Inference Via Sub-sampling.}
\label{alg1}
\begin{tabbing}
   \qquad \enspace REQUIRE Observed sample $\mathbb{S}_n=\{D_i\}_{i=1}^n$, subset number K, parameter sample size N\\
   \qquad \qquad  Randomly partition $\mathbb{S}_n$ without replacement into $K$ subsets $\mathbb{S}_{k=1}^K$ with size $n_k$ \\
   \qquad \qquad  For $\mathbb{S}_k \in \mathbb{S}_{k=1}^K$ do\\
   \qquad \qquad  \qquad 1. Get index set $\mathcal{I}_k$ for $\mathbb{S}_k$ \\
   \qquad \qquad  \qquad 2. Get sub-sampling sketching matrix $S_k$ \\
   \qquad \qquad  \qquad 3. Run MCMC sampling on modified model described as \eqref{eq: posteriors} and \eqref{modif} to\\
   \qquad \qquad  \qquad generating parameter samples $\{\btheta_{k1},\ldots,\btheta_{kN}\}$\\
   \qquad \qquad  Solve the linear program in \eqref{barycenter} using Weizfeld's Algorithm with \eqref{atom subset}-\eqref{atom full}\\
   \qquad \qquad RETURN empirical approximation posterior median function $\hat{\overline{\Pi}}_{n,\mathcal{I}}$
\end{tabbing}
\end{algorithm}

Algorithm \ref{alg1} provides a sample splitting approach to decrease computational complexity for posterior inference. Sampling $\bbeta$, $\sigma^2$, $\lambda$ and $\bh$ requires computing $K$, and inverting $V_{\lambda,{\bf Z,r}}$, this translates into $O(n^2q)$, $O(n^3)$ operations respectively per iteration. For $\lambda$ we also need to compute $\left|V_{\lambda,{\bf Z,r}}\right|$ which is $O(n^3)$. \citet{Bobb2015} recommend using at least $10^4$ iterations, which translates into $O(10^4n^3)$ operations (assuming $q<<n$). There is a clear trade-off between a large number of splits $K$ which decreases computational complexity, and using the whole sample $K=1$ which yields better inference. For example, choosing $K=n^{1/2}$ yields a computational complexity of  $O(10^4n^{3/2})$ for Algorithm \ref{alg1}. Next in Section \ref{Section: theory} we discuss the posterior convergence rate on a simpler, special case, and its dependence on the number of splits in detail.
\section{Theoretical Results}\label{Section: theory}

In this section we present the assumptions needed for our theoretical results, state our main theorem and discuss its implications. Our results focus on estimation of $h_0$ as this is the main function of interest and our main contribution and do not extend to the variable selection parameters $r_1, \ldots, r_q$. A complete theoretical analysis of the convergence for the distributed ``slab-and-spike" model remains a non-trivial challenge for future work. In the rest of this section, we will consider the revised model by removing the variable selection parameters $r_1, \ldots, r_q$ in \eqref{eq: posteriors} and focus on the estimation of $h_0$ only.

The proof structure logically layers two distinct nonparametric theories: first, we use established results on Gaussian Process posterior contraction rates for a single subset (e.g., \cite{van_der_Vaart_2009}); second, we use these rates as inputs for the general, nonparametric theory of median posterior convergence \citep{Minsker2017} to derive the final rate for our combined estimator. By proving the convergence of the nonparametric component $h_0$, our theorem provides the necessary theoretical guarantee for the semi-parametric model.

We first assume that the confounder and pollution exposure space $\mathcal{X}$, $\mathcal{Z}$ respectively are compact bounded sets. This is easily satisfied in practice. Next we define two function spaces. Letting $\alpha>0$, we define $C^\alpha[0,1]^q$ to be the Holder space of smooth functions $h:[0,1]^q\mapsto\mathbb{R}$ which have uniformly bounded derivatives up to $\floor\alpha$, and the highest partial derivatives are Lipschitz order $\alpha-\floor\alpha$. More precisely, we define the vector of $q$ integers as $\bv=(v_1,\dots,v_q)$ and 
\[
D^vg(\bz)=\frac{\partial^{\left(\sum_iv_i\right)}g(\bz)}{\partial z_1^{v_1},\dots,\partial z_q^{v_q}}.
\]
Then for function $h$ we define
\[
\|h\|_\alpha=\max_{\sum_iv_i\le\floor\alpha}
\sup_{D^vg(\bz)}+\max_{\sum_iv_i=\floor\alpha}\sup\frac{\left|D^vg(\bz)-D^vg(\bz')\right|}{\|\bz-\bz'\|^{\alpha-\floor\alpha}},\text{ for} \bz\neq\bz'.
\]
With the above we say that 
\[
C^\alpha[0,1]^q=\left\{h:[0,1]^q\mapsto\mathbb{R}\bigg|\|h\|_\alpha<M\right\}
\]
\citep{VaartAadW.1996WCaE}. 
 
Note that $C^\alpha[0,1]^q$ might be too large of a space as it is highly flexible in terms of differentiability restrictions. In light of this, if we only consider smooth functions, we introduce the following space. 

Let the Fourier transform be $\hat h(\lambda)=1/(2\pi)^q\int e^{i(\lambda,t)}h(t)dt$ and define \[
\mathcal{A}^{\gamma,r}(\mathbb{R}^q)=\left\{h:\mathbb{R}^q\mapsto\mathbb{R}\bigg|\int e^{\gamma\|\lambda\|^r}|\hat h(\lambda)|^2d\lambda<\infty\right\}.
\]
Set $\mathcal{A}^{\gamma,r}(\mathbb{R}^q)$ contains infinitely differentiable functions which are increasingly smooth as $\gamma$ or $r$ increase \citep{van_der_Vaart_2009}. 

\begin{theorem} \label{theorem: median rate}
Let $\delta_0(h)$ be the Dirac measure concentrated at $h_0$. For any $\delta  \in (0,1)$,  there exists a constant $C_1$ such that if we choose the number of splits  $K\le C_1\log{1/{\delta}}$, then with a probability of at least $1- \delta$, we have
\[
\|\hat\Pi_{n,g} - \delta_0(h_0)\|_\mathcal{F} \le 
\left\{
\begin{array}{ll}
    C_2(n/\delta)^{-\left(\frac{\alpha}{2\alpha+q}\right)}\left(\log (n/\delta)\right)^{\left(\frac{4\alpha+q}{4\alpha+2q}\right)} & \mbox{ if }  h_0\in\mathcal{C}^\alpha[0,1]^q,\\
   C_3 (n/\delta)^{-\frac{1}{2}}\left(\log (n/\delta)\right)^{\left(q+1+q/(2r)\right)} & \text{ if } h_0\in\mathcal{A}^{\gamma,r}(\mathbb{R}^q) \text{ and }r<2,\\
   C_3 (n/\delta)^{-\frac{1}{2}}\left(\log (n/\delta)\right)^{\left(q+1\right)} & \text{ if } h_0\in\mathcal{A}^{\gamma,r}(\mathbb{R}^q) \text{ and }r\ge2,
\end{array}
\right.
\]
where $C_2,C_3$ are sufficiently large constants.
\end{theorem}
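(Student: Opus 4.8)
The plan is to reduce the statement to three ingredients: a single-subset Gaussian process contraction rate, a comparison between the integral probability metric $\|\cdot\|_\cF$ and a Wasserstein distance, and the robustness of the geometric median. I would arrange the argument so that the whole $\delta$-dependence enters through the number of splits $K$, with the per-subset rate $\epsilon_{n_k}$ carrying the dependence on $n$ and on the smoothness class.

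First I would treat a single subset posterior $\Pi_k$ in isolation. The sub-sampling sketch $S_k$ selects $n_k=n/K$ of the design points (with a $\sqrt{K}$ rescaling), so the sketched model \eqref{modif} is, up to this scaling, an ordinary semi-parametric Gaussian process regression on $n_k$ i.i.d.\ observations with the re-scaled squared-exponential prior. I would therefore invoke the adaptive posterior contraction results for the inverse-Gamma-bandwidth squared exponential process \citep{van_der_Vaart_2009}: for $h_0\in C^\alpha[0,1]^q$ the rate in the relevant $L^2$-type norm is $\epsilon_{n_k}\asymp n_k^{-\alpha/(2\alpha+q)}(\log n_k)^{(4\alpha+q)/(4\alpha+2q)}$, while for $h_0\in\cA^{\gamma,r}(\RR^q)$ it is $n_k^{-1/2}(\log n_k)^{q+1+q/(2r)}$ when $r<2$ and $n_k^{-1/2}(\log n_k)^{q+1}$ when $r\ge 2$. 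The point to verify here is that the prior-mass, entropy, and complementary-mass conditions underlying these rates are unaffected by the sketch rescaling, so that each $\Pi_k\big(\|h-h_0\|>M\epsilon_{n_k}\big)$ is small with $P_0$-probability bounded below by a fixed constant $p>1/2$.

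Next I would convert this contraction into a bound in $\|\cdot\|_\cF$. Since $\cF$ is the unit ball of the RKHS $\mathbb{H}$ with a bounded, Lipschitz kernel, every $f\in\cF$ is Lipschitz, so the integral probability metric is dominated by a first-order Wasserstein distance to the point mass: $\|\Pi_k-\delta_0(h_0)\|_\cF\le W_1(\Pi_k,\delta_0(h_0))\le\big(\int\|h-h_0\|^2\,d\Pi_k\big)^{1/2}$. On the contraction event this is $\le C\epsilon_{n_k}$, so each subset posterior is weakly concentrated around $\delta_0(h_0)$ at radius $C\epsilon_{n_k}$ with probability at least $p$. I would then boost this constant confidence to $1-\delta$ using the robustness of the geometric median \citep{Minsker2015,Minsker2017}: because the $K$ subsets are built from disjoint data and are independent, whenever strictly more than half of the $\Pi_k$ lie within $C\epsilon_{n_k}$ of $\delta_0(h_0)$ the median $\overline{\Pi}_n$ lies within $C_\alpha\epsilon_{n_k}$, and a Chernoff bound on the number of ``bad'' subsets gives failure probability at most $e^{-cK}$. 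Taking $K$ of order $\log(1/\delta)$—in particular within the stated range $K\le C_1\log(1/\delta)$—drives this below $\delta$ while keeping $n_k=n/K$ large; substituting $n_k=n/K$ into $\epsilon_{n_k}$ and absorbing the resulting $\log(1/\delta)$ factors into the polylogarithmic terms yields the three displayed rates. A final step bounding the empirical-atom approximation error of $\hat{\overline{\Pi}}_{n,\mathcal{I}}$ from \eqref{atom subset}--\eqref{atom full} for $N$ large transfers the bound from $\overline{\Pi}_n$ to the computed estimator $\hat\Pi_{n,g}$.

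I expect the main obstacle to be the first step: showing that the sketched, sub-sampled posterior genuinely inherits the van der Vaart--van Zanten rate. The rescaling by $\sqrt{K}$ and the degenerate covariance $S_kS_k^\top$ in \eqref{modif} must be shown not to disturb the prior concentration and testing conditions, and the contraction has to be obtained in a norm strong enough to control, uniformly across subsets and with a single constant $p>1/2$, the Lipschitz functionals defining $\|\cdot\|_\cF$; securing this uniform weak-concentration level is precisely what the geometric-median boosting in the last step requires.
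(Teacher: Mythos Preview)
Your proposal is correct and follows essentially the same route as the paper: verify the van der Vaart--van Zanten prior-mass/entropy/complement conditions for the rescaled squared-exponential process on each subset, then feed these into Minsker's median-posterior machinery. The paper is considerably more terse than your outline---it cites Minsker's Theorem 7 and Corollary 8 as black boxes (these already package the Wasserstein bound and the Chernoff-type boosting you spell out in your steps 2--3), and it does not address the sketching-matrix issue or the empirical-atom approximation that you flag; those concerns are real but the paper simply does not engage with them.
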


The proof follows from results on convergence of the posterior median and scaled squared exponential Gaussian process properties. We defer the proof to the supplementary materials. 
The rate in Theorem \ref{theorem: median rate} is achieved for all levels of regularity $\alpha$ simultaneously. If $h_0\in\mathcal{C}^\alpha[0,1]^q$, then the adaptive rate is $\tilde{\mathcal{O}}\left((n/\delta)^{-\left({\alpha}/({2\alpha+q}\right)})\right)$, however further assuming $h_0$ is infinitely differentiable, then $h_0\in\mathcal{A}^{\gamma,r}(\mathbb{R}^q)$ and we recover the usual $\tilde{\mathcal{O}}\left(n^{-1/2}\right)$ rate. Intuitively, understanding $\alpha$ as the number of derivatives of $h_0$, this $n^{-1/2}$ rate is obtained letting $\alpha\rightarrow\infty$. 
Theorem \ref{theorem: median rate} sheds light into the trade-off between choosing the optimal number of splits $K$: large $K$ negatively impacts the statistical rate as it slows down convergence, however it helps with respect to computation complexity. Finally, dimension $q$ affects the rate on a logarithmic scale if $h_0$ is infinitely differentiable; in the case that $h_0\in\mathcal{C}^\alpha[0,1]^q$ then $q$ has a larger effect in the rate. This trade-off is further illustrated in Section \ref{section: simulation}.
 
\section{Simulation Results} \label{section: simulation}
To study our method's empirical performance in finite samples we evaluated it in several simulation settings. The simulated data is generated with the following procedure. We generated data sets with $n$ observations, $\mathbb{S}_n=\{D_i\}_{i=1}^n$, $D_i=(y_i,x_i,\bz_i)$, where $\bz_i=(z_{i1},\dots,z_{iq})^\top$ is the profile for observation $i$ with $q$ mixture components and $x_i$ is a confounder of the mixture profile generated by $x_i\sim N(3 \cos (z_{i1}),2)$. The outcomes were generated by $y_i \sim N(\beta^0 x_i + h_0(\bz_i),\sigma^2)$. Given the main focus is on how the algorithm performs for large $n$, we considered $q=4$ exposures, and each exposure vector $\{\bz_i\}_{i=1}^n$ was obtained by sampling each component $\bz_{i1},\dots,\bz_{i4}$ from the standard normal distribution $N(0,1)$. 

We considered the mixture-response function $h_0(\cdotp)$ as a non-linear and non-additive function of only ($z_{i1},z_{i2})$ with an interaction. In particular, let $\phi(x)=1/(1+e^{-x})$.  We generated $h$ as $$
h_0(\bm{z}_i)=4\phi\left(\frac{5}{6}\left\{z_{i1}+z_{i2}+\frac{1}{2}z_{i1}z_{i2}\right\}\right).
$$
We set $\beta^0=2$, and $\sigma^2=0.5$. We considered the total number of samples $n\in\{512, 1024, 2048, 4096\}$, and the number of splits $K=n^t$, with $t\in\{0,0.05,0.1,0.15,\ldots,0.7\}$. Note that for $n=512$, the subset sample size is not enough for performing the MCMC sampler when $t=0.7$. Therefore, simulation under this particular combination of $n=512$, $t=0.7$ was not performed. 
Each simulation setting is replicated 300 times. All computations are performed on a server with 2.6GHz 10 core compute nodes, with 15000MB memory.

\subsection{Assessing Performance Across Split Levels }\label{simu split levels}
Figures \ref{fig:reg_h} and \ref{fig:time} show the method performance for approximating $h_0$ by the median posterior. To evaluate performance, we ran a linear regression of the  posterior mean $\hat h$ on true $h_0$, i.e. $h_0(\bz_i)=\gamma_0+\gamma_1\hat h(\bz_i)+\epsilon_i$ $i=1,\dots,n$ and plot the estimated slope $\hat\gamma_1$, intercept $\hat\gamma_0$ and $R^2$ while varying number of sample splits $K$. A good $\hat h(\cdotp)$ would yield $\hat\gamma_0=0$, $\hat\gamma_1=1$, $R^2=1$ as $h_0(\bz)\approx\hat h(\bz)$. As the figure shows, as the number of splits increases with $t$, inference on $h_0$ starts to lose precision. This is natural; although the median geometrically improves at rate $n_k$, as splits increase each posterior sample becomes noisier. However, near $t=1/2$ the median performance for $\hat h$ is close to the performance when the entire sample is used ($t=0$) as measured by $\hat\gamma_0,\hat\gamma_1,R^2$, with significant computation time gains. Figure \ref{fig:time} shows computing time for inference on $h_0$ through the posterior median. There is a trade-off between sampling from a high dimensional Gaussian process posterior of $n$ samples, and a large number of data splits which require almost equivalent computation power to sample. However, this can be mitigated when datasets are sampled or collected distributedly.   Results suggest that splits with $t\in[1/4,1/2]$ decrease computational burden significantly. On the other hand Figure \ref{fig:reg_h},\ref{fig:time} and theoretical results in Section \ref{Section: theory} suggest that $t\le1/2$ offers a good approximations to the full-data posterior. Figures 1 and 2 and the theoretical results suggest choosing $t\in[1/4,1/2]$, with $t\rightarrow1/2$ as $n$ increases will optimize the computation-cost vs. precision trade-off. In the meantime, the mean squared error (MSE) of the estimated response $h$ was reported as Figure A4 
in the supplementary materials, which provides the same result as Figures \ref{fig:reg_h} and \ref{fig:time}.
\begin{figure}[H]
  \centering
  \includegraphics[scale=0.32]{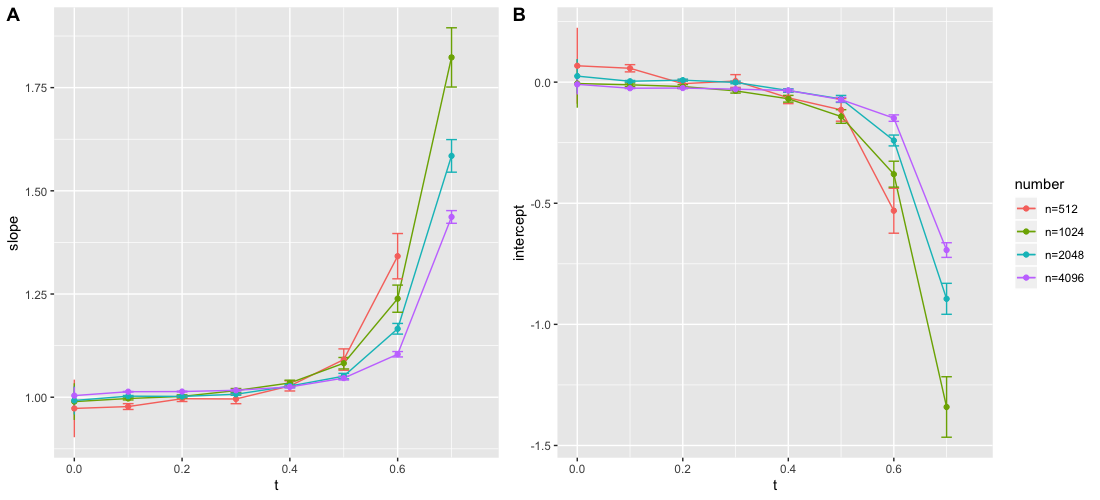}
   \caption{Regression summary results for ${\bf h} = \gamma_0 + \gamma_1 {\bf\hat h}$ across different sample size $n$ and data set splits. The setting of number of subsets are described above as $n^t$. We show (A) intercept: $\hat\gamma_0$, (B) slope: $\hat\gamma_1$. } 
   \label{fig:reg_h}
\end{figure}

\begin{figure}[H]
   \centering
   \includegraphics[scale=.32]{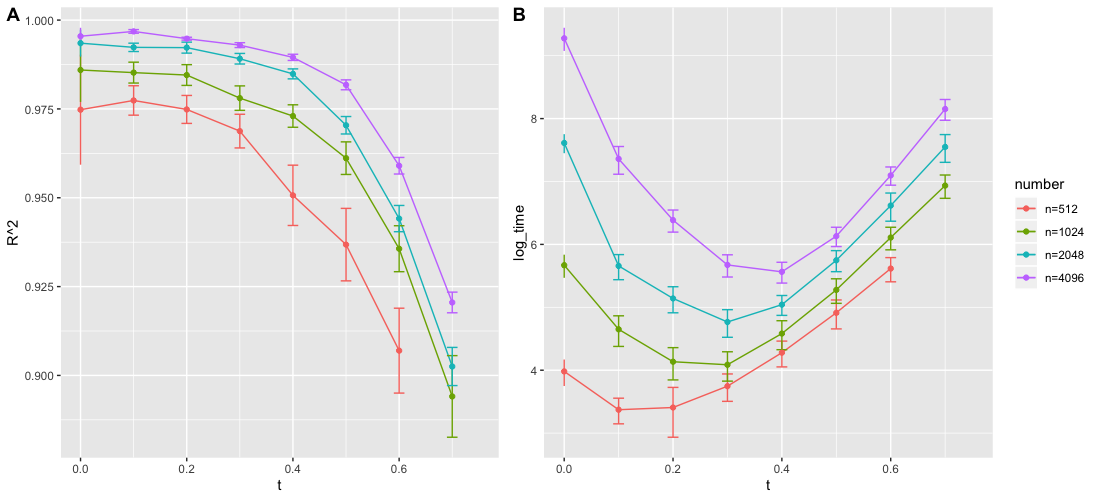}
   \caption{(A)Regression $R^2$ for ${\bf h} = \gamma_0 + \gamma_1 {\bf\hat h}$ and (B) Logarithmic runtime for the proposed method across different sample size $n$ and data set splits. The setting of number of subsets are described above as $n^t$. }
   \label{fig:time}
\end{figure}

To assess credible intervals for $\hat{h}$ estimation,  we reported mean coverage probability (MCP) of 95\% credible intervals of $\hat{h}$, which is calculated as $\text{MCP} = \frac{1}{U}\frac{1}{n}\sum_{u=1}^{U}\sum_{i=1}^{n} I(\hat{h_i}^{(u)})$, where $I(\cdot)$ is the indicator function for whether $h_i$ falls within the estimated credible interval. As shown in Table \ref{tab:CP}, for all the sample size settings, as the number of splits increases with t near 1/2, the MCP remains close to the nominal 95\%. Once $t>1/2$, coverage deteriorates sharply. The result is consistent with the performance shown in Figure \ref{fig:reg_h}. Moreover, as n increases the estimated standard deviations shrink, which amplifies the under-coverage for $t>1/2$; in our largest-n settings, MCP can approach zero under this regime.
\begin{table}
    \centering
    \begin{tabular}{ccccccccc}
        n & \multicolumn{8}{c}{t}  \\
        \hline
         ~& 0 &0.1 &0.2 &0.3 &0.4 &0.5 &0.6 &0.7\\
         \hline
         512 & 0.9628 & 0.9863 & 0.9808 & 0.9648 & 0.6015 & 0.3937 & 0.0937 & NA\\
         1024 & 0.9267 & 0.9218 & 0.9541 & 0.9414 & 0.8769 & 0.6943 & 0.0802  & 0.0003\\
         2048 & 0.9877 & 0.9431 & 0.9609 & 0.9897 & 0.8442 & 0.8334 & 0.0522 & 0.0004\\
         4096 & 0.9652 &0.9522 & 0.9248 & 0.9609 & 0.9306 & 0.9324 & 0.0576 & 0.0000
    \end{tabular}
    \caption{Mean Coverage Probabilty of 95\% Credible Interval of $\hat{h}$}
    \label{tab:CP}
\end{table}

 Variable selection performance of our method was also evaluated with the mean posterior inclusion probabilities (PIPs). We provided the mean PIPs for $X_3$ and $X_4$ cross sample sizes and split exponents t as Table A2 
 in the supplementary materials. The mean PIPs for both variables remain near zero as $t \xrightarrow{}1/2$. When $t>1/2$, the precision of PIPs decreases. The same qualitative pattern holds for$X_1$ and $X_2$: PIPs are exactly 1 when $t$ is small, and they decline modestly to around 0.8 once $t>1/2$.
 
 The performance of the method under a more realistic exposure setting where exposures are highly correlated is reported in the supplementary material.

\subsection{Assessing Performance Across Heterogeneous Subsetting Scheme }\label{simu subsetting scheme}
Another concern about the algorithm consistency is whether the subsetting scheme will affect the performance. Specifically, how heterogeneous splitting will affect the estimation accuracy. As for the subset sample size variability, we assess the model performance via the following additional simulation. We considered the same simulation setting as section \ref{simu split levels}, with a total number of samples $n=2048$. The number of splits $K=n^t$ is fixed with $t \in \{0.1,...,0.4\}$. Instead of randomly evenly partitioning the sample into K subsets, we considered the sample size for each subset to be randomly sampled from $[(1-\Delta)*n/K, (1+\Delta)*n/K]$, with $\Delta \in \{0,0.1,...,0.7\}$. Each simulation setting is replicated 300 times. We reported the MSE of posterior mean $\hat{h}$ in Figure A5 in the supplementary materials. As the figure shows, when the number of splits is relatively small, inference on $h_0$ keeps the same level of precision as the variability added into the subset sample sizes. The performance is mainly affected by the smallest subset, and it can be seen that when $\Delta>0.7$ and $t=0.4$, estimation accuracy decreases significantly. We thus still recommend ensuring the smallest sample size of subsets should be at least $n^{0.5}$. A more extreme uneven partition scenario is investigated and reported in the supplementary materials.

As for the subset exposure covariate distributions variability, the model variable selection performance was assessed via the following additional simulation. We considered the same simulation setting as section \ref{simu split levels}, with a total number of samples $n=2048$. The number of splits $K=n^t$ is fixed with $t=0.2$. To include scale heterogeneity for covariates, for the first $K/2$ subsets, we randomly select one of four exposure variables and shrink its marginal variance by a factor $c_v$, with $c_v \in \{0.1^2,0.3^2,0.5^2,0.7^2\}$. For the remaining $K/2$ subsets, exposure variances are unchanged. Each simulation setting is replicated 300 times. The method performance was compared with the traditional BMKR fitted to the whole dataset($t=0$). We reported the mean overall PIPs and mean across-subset PIP standard deviations(SD) for exposure variables as Table \ref{tab:PIP_var_covarite}. Across all scenarios, PIPs were highly stable for $X_1$ and $X_2$. While the mean PIPs and mean across-subset PIPs SDs of two noise exposure variables $X_3$ and $X_4$ increase as the level of covariate heterogeneity increases, they still present reliable variable selection behavior with conventional thresholds (e.g., 0.5). Overall, variable selection is thus robust to moderate subsetting-induced heterogeneity in exposure distributions when a reasonable number of subsets is used.
\begin{table}[!htp]
    \centering
    \begin{adjustbox}{width=0.9\textwidth,center}
    \begin{tabular}{ccccccccccc}
        $c_v$ & \multicolumn{2}{c}{0.1} &\multicolumn{2}{c}{0.3} &\multicolumn{2}{c}{0.5}&\multicolumn{2}{c}{0.7}&\multicolumn{2}{c}{1} \\
        \hline
        \hline
         t& 0 &0.2 &0 &0.2 &0 &0.2 &0 &0.2&0 &0.2\\
         \hline
         $X_1$ & 1 & 1(0.000) & 1 & 1(0.000) & 1 & 1(0.000) & 1 & 1(0.000)& 1 & 1(0.000)\\
         $X_2$ & 1 & 1(0.000) & 1 & 1(0.000) & 1 & 1(0.000) & 1  &1(0.000)& 1 & 1(0.000)\\
         $X_3$  & 0.088 & 0.291(0.143) & 0.076 & 0.284(0.125) & 0.022 & 0.182(0.090) & 0.008& 0.159(0.076) &0.004 &  0.006(0.046)\\
         $X_4$ & 0.084 & 0.251(0.127) & 0.042 & 0.173(0.094) & 0.012 & 0.112(0.075)& 0.008 & 0.055(0.053) & 0.004 & 0.004(0.024)
    \end{tabular}
    \end{adjustbox}
    \caption{Mean estimated PIP and mean across-subset standard deviations of exposure variables across different shrinkage factors $c_v$ of exposure variance. }
    \label{tab:PIP_var_covarite}
\end{table}

\section{Application:  Major Particulate Matter Constituents and Greenspace on Birthweight in Massachusetts}

To further evaluate our method on a real data set, we considered data from a study of major particulate matter constituents and greenspace and birthweight. The data consisted of the outcome, exposure and confounder information from 907,766 newborns in Massachusetts born during January 2001 to 31 December 2012 (\cite{FongKelvinC2019Rtom}). After excluding the observation records with missing data, there were $n=685,857$ observations used for analysis. As exposures, we included exposure data averaged over the pregnancy period on address-specific normalized Ozone,  PM$_{2.5}$, EC, OC, nitrate, and sulfate, as well as the normalized difference vegatation index (NDVI), in the nonparametric part of the model, and confounders including maternal characteristics in the linear component of the model. These pollutant exposures were estimated using a high-resolution exposure model based on remote-sensing satellite data, land use and meterologic variables \citep{di2016hybrid}. We randomly split the sample to $K=686$ (using $t\approx1/2$) different splits, each split contains $\approx$1000 samples. For each split, we ran the MCMC sampler for 1,000 iterations after 1,000 burn-in iterations, and every fifth sample was kept for further inference, thus we retain $N=200$ posterior samples for each split. Further details on the confounders included in the analysis can be found in the supplementary materials.
In addition, the correlation matrix for this data is in Figure A1 in the supplementary materials.

Figure \ref{fig: pollutant exp}  shows estimated exposure-response relatoinships between the mean outcome and each exposure, when fixing all other exposures at their median values. Results suggests that, for the PM$_{2.5}$, EC, and OC terms, exposure levels are negatively associated with mean birthweight.  In contrast,  Ozone, nitrate, and NDVI levels are positively associated with mean birthweight, and there is no association between birthweight and maternal exposure to sulfate. Among negatively associated constituents, EC and remaining PM$_{2.5}$ constituents have stronger linear negative associations compared to OC. Among positive associations, NDVI and Ozone seem to have a strong linear relationship with birth-weight. However, for nitrate, when its concentration is lower than +1 standard deviation, it is positively associated with birth weight increase, whereas when it is above the mean level over 1 standard deviation, it is negatively associated with birth-weight. This suggests the possibility of effect modification.

Figure \ref{fig: pollutant bivariate} investigates the bivariate relationship between pairs of exposures and birthweight, with other exposures fixed at their median levels. Results suggest different levels of non-linear relationships between constituent concentrations and birthweight. Unlike the pattern of sulfate shown in figure \ref{fig: pollutant exp}, there exists a strong inverted u-shaped relationship between sulfate and mean birthweight when nitrate concentration is at around -1 standard deviation. A similar relationship is visible between nitrate and mean birthweight when sulfate concentration is higher than +0.5 standard deviation. Moreover, the PM$_{2.5}$ shows no association with birth weight when its concentration is lower than 
0 standard deviation, with sulfate concentration lower than -1 standard deviation. 
\begin{figure}[htp]
\centering
\includegraphics[scale=.62]{./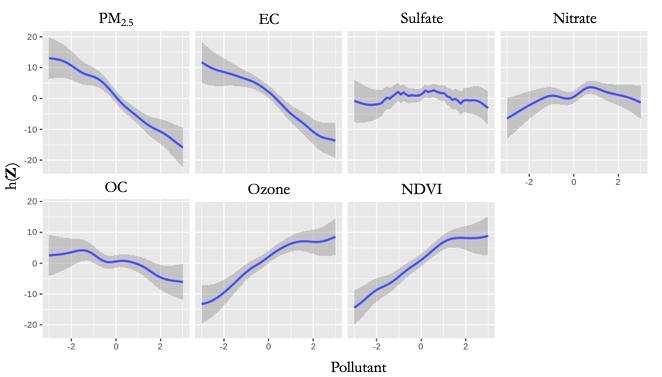}

\caption{Univariate estimated effects on birth-weight per standard deviation increase in PM$_{2.5}$, EC, OC, nitrate, sulfate, NDVI, and Ozone. 95\% confidence bands of estimates are in gray. All of the other mixture components are fixed to 50th percentile level when investigating single mixture effect on birth-weight. We show h(Z): difference of birth-weight comparing to the mean birth-weight of samples in grams; Pollutant(Z): change of each of the major constituents with the measure of standard deviation of that constituent.} 
\label{fig: pollutant exp}
\end{figure}

\begin{figure}[htp]
\centering
\includegraphics[scale=.62]{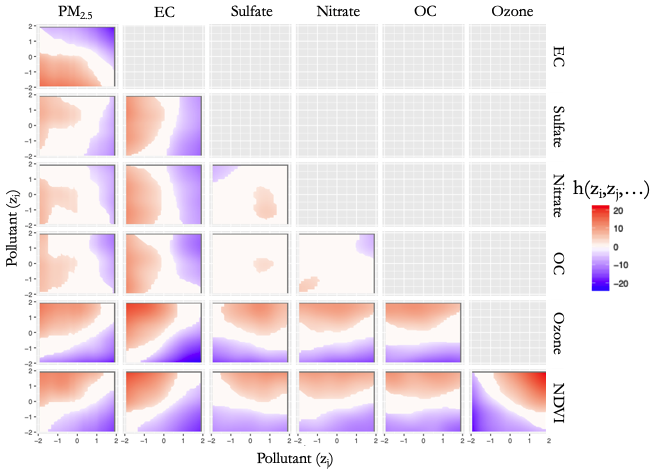}
\caption{Bivariate estimated effects on birthweight per standard deviation increase between PM$_{2.5}$, EC, OC, nitrate, sulfate, NDVI, and Ozone. All of the other mixture components are fixed to 50$^{th}$ percentile level when investigating bivariate mixture effects on birthweight.  We show h($z_i$, $z_j$): difference of birth-weight compared to the mean birth-weight of samples in grams; Pollutant($z_i$) and Pollutant($z_j$): change of each of the major constituents with the measure of standard deviation of that constituent.} 
\label{fig: pollutant bivariate}
\end{figure}

\section{Discussion}
As industry and governments invest in new technologies that ameliorate their environmental and pollution impacts, the need to quantify the effects of environmental mixtures on health is increasingly of interest. In parallel, electronic data registries such as the Massachusetts birth registry are increasingly being used in environmental health studies. These rich data sets allow measuring small, potentially non-linear effects of pollutant mixtures that impact public health. 
To the best of our knowledge, we propose the first semi-parametric Gaussian process regression framework that can be used to estimate effects using large datasets. In particular, we model the exposure-health outcome surface with a Gaussian process, and our applied model utilizes priors that allow for feature selection. Additionally, we use a linear component to incorporate confounder effects. Previous approaches for similar analysis had to either assume a parametric relationship or use a single pollutant per regression to estimate effects of interest \citep{FongKelvinC2019Rtom}.

To ameliorate the computational burden of computing the Bayesian posteriors of the Gaussian process, we propose a divide-and-conquer approach. Our method consists of splitting samples into subsets, computing the posterior distribution for each data split, and then combining the samples using a generalized median based on the order 1 Wasserstein distance \citep{Minsker2017}.    

We tailor the method to incorporate a squared exponential kernel and provide theoretical guarantees for the convergence of this foundational Gaussian Process component, which validates the soundness of our divide-and-conquer strategy for the non-parametric function. Our convergence results accommodate different assumptions for the underlying space of the true feature-response function. We provide theoretical and empirical results which illustrate a trade-off for the optimal number of splits. As the number of data splits increases, the posterior computation of the small data subsets will be faster; however, these posteriors will be noisy. In other words, there is a tension between computational cost and obtaining precise estimates. We propose using $K= n^{1/2}$ sample splits to efficiently approximate the posterior in a relatively short time. 

To illustrate the benefit of our method, we analyzed the association of a mixture of pollution constituents and green space and birthweights in the  Massachusetts birth registry. To our knowledge, a Gaussian process regression analysis, commonly known as Bayesian kernel machine regression in the environmental literature, of the Massachusetts birth registry data would not have been possible using existing fitting algorithms. Our analysis found the strongest adverse associations between traffic-related particles and PM$_{2.5}$ not accounted for by the other pollutants.  We observed the strongest positive associations with birthweight for Ozone and greenspace. A possible explanation for the Ozone finding is that Ozone is often typically negatively correlated with NO$_2$, another traffic pollutant, which was not included in this analysis. The greenspace finding is consistent with other analyses showing potentially beneficial effects of maternal exposure to greenness.

Our work also has some theoretical limitations. While our primary focus was on the computational scalability and the estimation of the non-linear function $h_0$, we did not include an analysis of the posterior distributions for the variable selection parameters ($r_j$). A full validation of the distributed model's variable selection properties is an important next step, but was beyond the scope of this paper, which is focused on solving the computational bottleneck of the GP component. In terms of applications, future efforts will also explore other multi-pollutant analyses both in this dataset as well as a range of other analyses in large-scale birth registry databases. 

\section*{Acknowledgments}
The first two authors contributed equally to this work. We thank Dr Luke Duttweiler for his work that improved the overall quality of the paper. We thank the referees and the associate editor for their constructive comments.

\section*{Funding}
Brent Coull was supported by the National Institutes of Health (NIH) under grant R01ES035735. Joel Schwartz was supported by the NIH under grant R01ES032418. Junwei Lu was supported by the National Science Foundation (NSF) under grant DMS-2434664, the William F. Milton Fund, and NIH R01ES032418.

\section*{Supplementary Materials}
Web Appendices, Tables, and Figures referenced in Sections 3, 4, and 5, are available with this paper at the Biometrics website on Oxford Academic. The simulation code is available online, and the corresponding package is available at https://github.com/junwei-lu/fbkmr.

\section*{Data Availability Statement}
Due to data privacy, we will only share the source code in the github repository in https://github.com/junwei-lu/fbkmr. Access to real data is available from the author, Joel Schwartz, upon reasonable request at jschwrtz@hsph.harvard.edu.

\newpage

\bibliographystyle{biom} 
\bibliography{paper-ref.bib}


\newpage
\setcounter{page}{1}

\begin{center}
\textit{\large Supplementary Materials for}
\end{center}
\begin{center}
{\large Scalable Gaussian Process Regression Via Median Posterior Inference for Estimating Multi-Pollutant Mixture Health Effects}
\vskip10pt
 Aaron Sonabend-W, Jiangshan Zhang, Luke Duttweiler, Edgar Castro, Joel Schwartz, Brent A. Coull, and Junwei Lu
\end{center}

\setcounter{section}{0}
\renewcommand{\thesection}{\Alph{section}}

\maketitle
\begin{abstract}
  This document contains the supplementary material to the paper
  ``Scalable Gaussian Process Regression Via Median Posterior Inference for Estimating Multi-Pollutant Mixture Health Effects".
\end{abstract}

\section{Proof of Theorem \ref{theorem: median rate}}\label{section: proof}
The proof for Theorem \ref{theorem: median rate} uses the following results.
\begin{assumption}\label{assumption: A general}
Let $A$ be a random variable with positive support, the distribution of $A$ has a Lebesgue density $g$ such that 
\[
C_1a^{d_1}\exp\{-D_1a^q\log^{d_2} a\}\le g(a)\le C_2 a^{d_1}\exp\{-D_2a^q\log^{d_2} a\},
\]
for every large enough $a>0$ and constants $C_1,D_1,C_2,D_2>0$ and $d_1,d_2\ge0$. 
\end{assumption}
\begin{assumption}\label{assumption: subexponential tails}
Let $H$ be a Gaussian field, the associated spectral measure $\mu$ satisfies
\[
\int e^{\delta\|\lambda\|}\mu(\delta\lambda)<\infty,
\]
for some $\delta>0$. We say that $H$ has subexponential tails.
\end{assumption}
\begin{assumption}\label{assumption: decreasing Lebesgue measure}
Let $H$ be a Gaussian field, $H$ possesses a Lebesgue density $f$ such that $a\mapsto f(a\lambda)$ is decreasing on $(0,\infty)$ for every, $\lambda\in\mathbb{R}^q$.
\end{assumption}

For a random variable $A$ satisfying Assumption \ref{assumption: A general}, let $H^A=\{H_{A\bz}:\bz\in[0,1]^q\}$ be a centered rescaled Gaussian process. We consider the Borel measurable map in $C[0,1]^q$, equipped with the uniform norm $\|\cdot\|_\infty$. 

\begin{theorem}[Theorem 3.1 in \citep{van_der_Vaart_2009}]\label{theorem: scaled SE conditions}
Let $H$ be a centered homogeneous Gaussian field which satisfies Assumptions \ref{assumption: subexponential tails}, \ref{assumption: decreasing Lebesgue measure}, then there exist Borel measurable subsets $B_n$ of $C[0,1]^q$ and for sufficiently large $n$ and big enough constant $C_4$
\begin{align}\label{cond_contract}
\begin{split}
\log N\left(\bar\epsilon_n,B_n,\|\cdot\|_\infty\right)\le  n\bar\epsilon_n^2,\\
P\left(H^A\notin B_n\right)\le  e^{-4n\epsilon_n^2},\\
P\left(\|H^A-h_0\|_\infty< \epsilon_n\right)\ge e^{-n\epsilon_n^2},
\end{split}
\end{align}
where
\begin{itemize}
\item if $h_0\in C^\alpha[0,1]^q$ for $\alpha>0$ then $\epsilon_n=n^{-\alpha/(2\alpha+q)}\left(\log n\right)^{\kappa_1}$, $\bar\epsilon=C_4\epsilon_n\left(\log n \right)^{\kappa_2}$, for $\kappa_1=((1+q\lor d_2)/(2+q/\alpha)$ and $\kappa_2=(1+q-d_2)/2$,
\item if $h_0$ is the restriction of a function in $\mathcal{A}^{\gamma,r}(\mathbb{R}^q)$ to $[0,1]^q$ with spectral density satisfying $|f(\lambda)|\ge C_3\exp\{-D_3\|\lambda\|^\nu\}$ for some constants $C_3,D_3,\nu>0$, then $\bar\epsilon_n=\epsilon_n(\log n)^{(q+1)/2}$ and $\epsilon_n=C_4n^{-\frac{1}{2}}\left(\log n\right)^{(q+1)/2}$ if $r\ge\nu$, and $\epsilon_n=C_4n^{-\frac{1}{2}}\left(\log n\right)^{\left((q+1)/2+q/(2r)\right)}$ if $r<2.$
\end{itemize}
\end{theorem}

\begin{theorem}[Theorem 2.2 in \citep{van_der_Vaart_2009}]\label{theorem: scaled gaussian rate}
Let $H=\{H_{\bZ}:\bz\in[0,1]^q\}$ be the centered Gaussian process, with covariance function $\mathbb{E}[H_{\bZ}H_{\bZ'}]=\exp\{-\|\bZ-\bZ'\|_n^2 \}$. Also let $\sigma$ be Gamma-distributed random variable. We consider $H=\{H_{\sigma\bZ}:\bz\in[0,1]^q\}$ as a prior distribution for $h_0$. Then for every large enough $M$,
\[
\Pi_{h,\sigma}\left(\|h-h_0\|_n+|\sigma-\sigma_0|\ge M\epsilon|\bZ_1,\dots,\bZ_n\right)\rightarrow0\text{ as }n\rightarrow\infty,
\]
where
\[
  \epsilon =
  \begin{cases}
    n^{-\left(\frac{\alpha}{2\alpha+q}\right)}\left(\log n\right)^{\left(\frac{4\alpha+2}{4\alpha+2q}\right)} & \text{if } h_0\in\mathcal{C}^\alpha[0,1]^q, \\
    n^{-\frac{1}{2}}\left(\log n\right)^{\left(q+1+q/(2r)\right)} & \text{if }h_0\in\mathcal{A}^{\gamma,r}(\mathbb{R}^q) \text{ and }r<2,\\
    n^{-\frac{1}{2}}\left(\log n\right)^{\left(q+1\right)} & \text{if }h_0\in\mathcal{A}^{\gamma,r}(\mathbb{R}^d) \text{ and }r\ge2.
  \end{cases}
\]
\end{theorem}

\begin{theorem}[Theorem 7 in \citep{Minsker2017}]\label{theorem: wasserstein bound}
Let $\bZ_1,\dots,\bZ_{n_k}\sim P_0$ be an $i.i.d$ sample, and assume that $\epsilon_k>0$ $\Theta_k\subset\Theta$ are such that for some constant $\tilde C_1>0$
\begin{align}\label{modified conditions}
\begin{split}
&\log M(\epsilon_k,\Theta_k,\psi)\le n_k\epsilon_k^2,\\
&\Pi(\Theta\backslash \Theta_k)\le\exp\{-n_k\epsilon_k^2(\tilde C_1+4)\},\\
&\Pi\left(\theta:-P_0\left(\log\frac{p_\theta}{p_0}\right)\le\epsilon_k^2,P_0\left(\log\frac{p_\theta}{p_0}\right)^2\le\epsilon_k^2\right)\ge\exp\{-\tilde C_1n_k\epsilon_k^2\}.
\end{split}
\end{align}
Then there exists constants $R(\tilde C_1)$ and $\tilde C_2$ such that 
\[
P\left(d_{W_{1,\rho}}\left(\delta_0,\Pi_k(\cdot\mid \bZ_1,\dots,\bZ_K)\right)\ge R\epsilon_k+e^{-\tilde C_2 n_k \epsilon_k^2}\right)\le\frac{1}{n_k\epsilon_k^2}+4e^{-\tilde C_2 n_k\epsilon_k^2}.
\]
\end{theorem}
\begin{corollary}[Corollary 8 in \citep{Minsker2017}]\label{corollary: median bound}
Let $\bZ_1,\dots,\bZ_n\sim P_0$ be an $i.i.d.$ sample, and let $\hat\Pi_{n,g}$ be defined as in Theorem \ref{theorem: median rate}.  Under conditions \ref{modified conditions}, if $\epsilon_k$ satisfies $1/(n_k\epsilon_k^2)+4e^{-(1+\tilde C_2/2) n_k\epsilon_k^2/2}<\frac{1}{7}$, then
\[
\mathbb{P}\left(\left\|\delta_0-\hat\Pi_{n,g}\right\|_\mathcal{F}\ge1.52\left(R\epsilon_k+e^{-\tilde C_2 n_k \epsilon_k^2}\right)\right)<1.27^{-K}
\]
\end{corollary}
\begin{proof}(of Theorem \ref{theorem: median rate}):
If $\frac{1}{\rho^q} = A^q$ has a Gamma distribution, then Assumption \ref{assumption: A general} is satisfied for our model with $d_2=0$. Additionally, as $H$ is squared exponential Gaussian process, it is a density relative to the Lebesgue measure given by 
\[
\lambda\mapsto\frac{1}{2^q\pi^{q/2}}\exp\{-\|\lambda\|^2/4\}
\]
which has sub-exponential tails (see \citep{van_der_Vaart_2009}). Therefore, by Theorem \ref{theorem: scaled SE conditions} conditions \eqref{cond_contract} are satisfied for $H^A$ with $\epsilon_k =n_k^{-\alpha/(2\alpha+q)}\left(\log n_k\right)^{(4\alpha+q)/(4\alpha+2q)}$ if $h_0\in\mathcal{C}[0,1]^q$ and if $h_0\in\mathcal{A}^{\gamma,r}(\mathbb{R}^q)$, then 

\[
 \epsilon_k= 
  \begin{cases}
    n_k^{-\frac{1}{2}}\left(\log n_k\right)^{\left(q+1+q/(2r)\right)} & \text{if }r<2,\\
    n_k^{-\frac{1}{2}}\left(\log n_k\right)^{\left(q+1\right)} & \text{if }r\ge2.
  \end{cases}
\]
Note that \eqref{cond_contract} (from non-parametric GP contraction theory) map one-to-one to Conditions in \eqref{modified conditions} for the Hellinger metric $\psi$ (from the general median posterior theory ) (see \citep{VaartvanderA2008}), thus by Theorem \ref{theorem: wasserstein bound} with $\epsilon_k>0$ defined as above we have
\begin{align}\label{equation: prob. bound}
P\left(d_{W_{1,\rho}}\left(\delta_0,\Pi_k(\cdot\mid \bZ_1,\dots,\bZ_K)\right)\ge R\epsilon_k+e^{-\tilde C_2 k\epsilon_k^2}\right)\le\frac{1}{n_k\epsilon_k^2}+4e^{-\tilde C_2 n_k\epsilon_k^2}.
\end{align}
note that whether $h_0\in\mathcal{C}^\alpha[0,1]^q$ or $h_0\in\mathcal{A}^{\gamma,r}(\mathbb{R}^q)$ we can choose $k(n)$ such that $1/(n_k\epsilon_k^2)+4e^{-(1+\tilde C_2/2) n_k\epsilon_k^2/2}<\frac{1}{7}$. For example any $k\le n^{1/2}\log n$ would work well. Therefore, for any $\delta>0$, and fixed $k(n)$ using Corollary \ref{corollary: median bound} there is an $\epsilon_k(\delta)$ with a large enough $n$ such that
\[
\mathbb{P}\left(\left\|\delta_0-\hat\Pi_{n,g}\right\|_\mathcal{F}\ge1.52\left(R\epsilon_k+e^{-\tilde C_2 n_k \epsilon_k^2}\right)\right)<\delta.
\]
\end{proof}
\section{Details on Application to Boston Birth Weight Data}\label{appendix: birthweight data}
Each record consists of the outcome of interest which is the birth-weight of the newborn, confounders such as maternal age (years), maternal race (white, black, Asian, American Indian, other), maternal marital status (married, not married), maternal smoking during or before pregnancy (yes, no), maternal education (highest level of education attained: less than high school, high school, some college, college, advanced degree beyond college), parity (first-born, not first-born), maternal diabetes (yes, no), gestational diabetes (yes, no), maternal chronic high blood pressure (yes, no), maternal high blood pressure during pregnancy (yes, no), Kessner index of adequacy of prenatal care (adequate, intermediate, inadequate, no prenatal care), mode of delivery (vaginal, forceps, vacuum, first cesarean birth, repeat cesarean birth, vaginal birth after cesarean birth), clinical gestational age (weeks), year of birth (one of 2001–2012), season of birth (spring, summer, autumn, winter), date of birth,  newborn sex (male, female), Ozone concentration, Normalized Difference Vegetation Index (NDVI),  Medicaid-supported prenatal care (yes, no). Finally pollution exposure measures are concentration of PM$_{2.5}$ and four major chemical constituents of it: elemental carbon (EC), organic carbon (OC), nitrate, and sulfate. After excluding the observation records with missing data, the final sample with size equal to $n=685,857$ is used for our model illustration. We treated normalized Ozone , NDVI, PM$_{2.5}$, EC, OC, nitrate, and sulfate as mixture components for non-parametric parts, and other variables as covariates. For the date of birth within one year, in order to control the temporal effect on birth weight, we implement a cosine transformation on it, with birth date on January $1^{st}$ has highest positive effect on birth weight, and June $15^{th}$ has lowest negative effect on the birth weight. The model used is \eqref{eq: model}. In our main analysis, we scaled the estimated effects per a standard deviation increase per each pollutant, which is more representative of a real world scenario than mass scaling.
\begin{figure}
    \centering
    \includegraphics[width=0.5\linewidth]{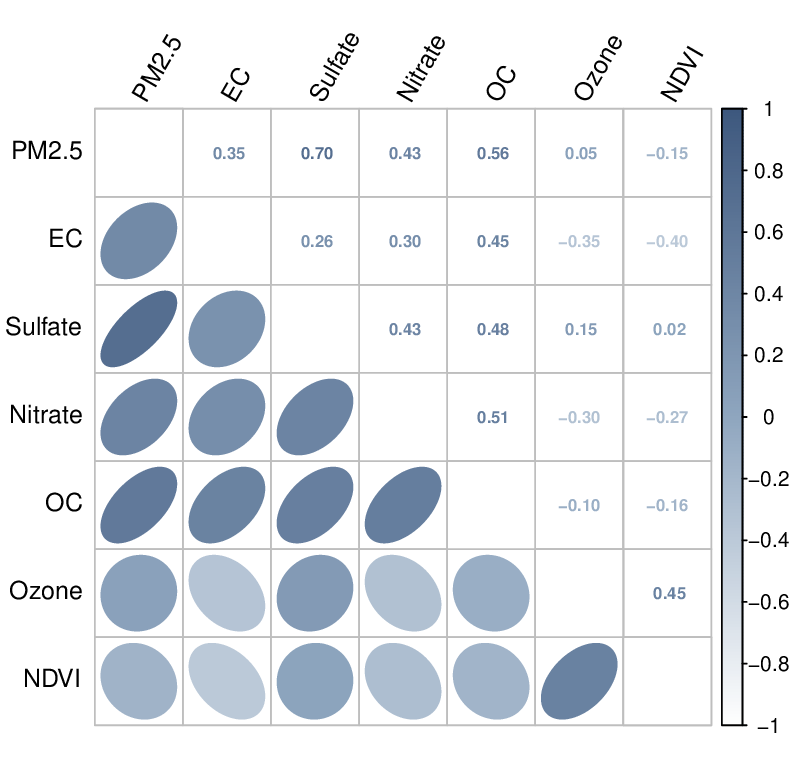}
    \caption{Correlation matrix of Boston birth weight data for 7 mixture component.}
    \label{fig:corr_pollution}
\end{figure}
\section{Additional Simulation Result}\label{appendix: additional simulation results}
\subsection{Performance of method with correlated exposures}    
In order to extend the method to a more realistic exposure setting where exposures are highly correlated, we consider that $z_1$ and $z_3$ are highly correlated with a correlation $\rho_{13}=0.8$; in the meantime, correlation between $z_2$ and $z_3$ is set as $\rho_{23}=0.3$; correlation between $z_1$ and $z_2$ is set as $\rho_{12}=0.1$. We applied the same analysis pipeline as in the independent-exposure setting in section \ref{simu split levels}, and report results in Figure \ref{fig:h_corr} and \ref{fig:h_mse_corr}. It can be seen that compared to the result without exposure correlation, the inference performance of the method on $h$ is similar. However, the empirical standard errors for all metrics are larger than those in the independent-exposure setting, as expected in the presence of multicollinearity.We also observe that component-wise variable selection can become unstable when exposures are highly correlated, leading to increased variability in PIP estimates. As a potential extension, we note that hierarchical/group-level selection could mitigate this issue within our framework. Finally, the computation–precision trade-off remains evident, with the optimal choice of $t \xrightarrow{}1/2$ as n increases.
    \begin{figure}[h]
   \centering
   \includegraphics[width=0.8\linewidth]{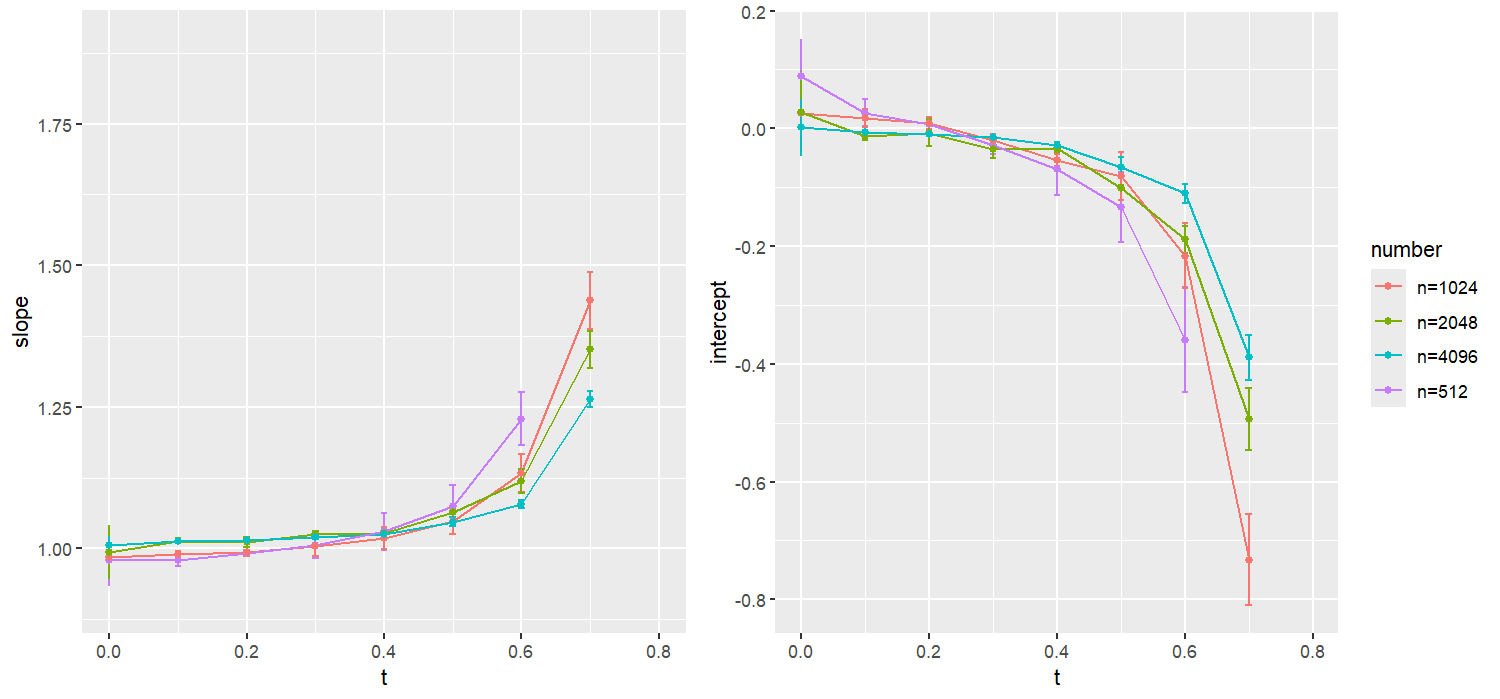}
    \caption{Regression summary results for ${\bf h} = \gamma_0 + \gamma_1 {\bf\hat h}$ across different sample size $n$ and data set splits under correlated exposure setting. The setting of number of subsets are described above as $n^t$. We show (A) intercept: $\hat\gamma_0$, (B) slope: $\hat\gamma_1$. } 
    \label{fig:h_corr}
\end{figure}

\begin{figure}[h]
    \centering
    \includegraphics[width=0.8\linewidth]{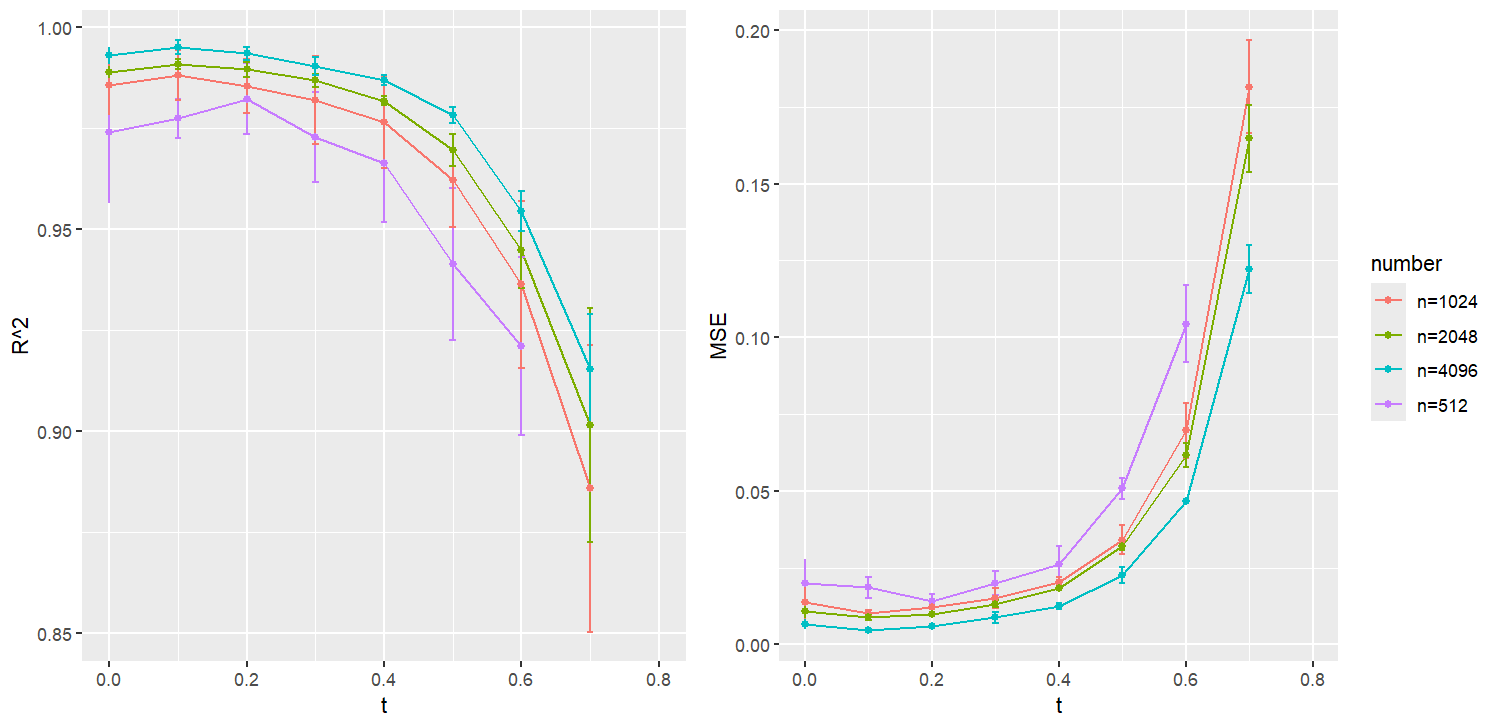}
    \caption{(A)Regression $R^2$ for ${\bf h} = \gamma_0 + \gamma_1 {\bf\hat h}$ and (B)MSE of ${\bf\hat h}$ for the proposed method across different sample size $n$ and data set splits under correlated exposure setting. The setting of number of subsets are described above as $n^t$. }
    \label{fig:h_mse_corr}
\end{figure}

\subsection{Performance of method with extreme uneven partition}
Now we would like to discuss a more extreme setting of an uneven partitioning. Following the simulation setting in section \ref{simu subsetting scheme}, we now consider that the sample size for each subset is randomly sampled from $[n^{(1-t-\Delta)}, \min(n^{(1-t+\Delta)},n)]$, with $\Delta \in \{0,0.1,0.2,0.3\}$. In this setting, the size of the subset varies in an exponential order. As shown in Table \ref{tab:mse_delta}, the MSE of $\hat{h}$ increases when $\Delta$ increases, while it remains at a consistent range when the minimal possible sample size of the subset $n^{(1-t-\Delta)}$ is greater than $n^{1/2}$. Once sample sizes of some of the subsets are less than $n^{1/2}$, the inference performance of the method drops dramatically. This simulation result is consistent with our theoretical finding. 
\begin{table}[!htp]
    \centering
    \begin{tabular}{ccccc}
        t & \multicolumn{4}{c}{$\Delta$}  \\
         ~& 0 &0.1 &0.2 &0.3\\
         \hline
         0.1 & 0.0064(0.0007) & 0.0067(0.0007) & 0.0092(0.0028) & 0.0124(0.0047)\\
         0.2 & 0.0077(0.0008) & 0.0085(0.0015) & 0.0120(0.0031) & 0.0162(0.0066) \\
         0.3 & 0.0104(0.0009) & 0.0125(0.0017) & 0.0219(0.0055) & 0.0355(0.0101) \\
         0.4 & 0.0155(0.0012) &0.0204(0.0020) & 0.0418(0.0083) & 0.1426(0.0477) 
    \end{tabular}
    \caption{MSE of $\hat{h}$ across different data set splits and sample size variance parameter $\Delta$.}
    \label{tab:mse_delta}
\end{table}

\begin{table}
    \centering
    \begin{tabular}{ccccccccc}
      \hline
      \hline
         \multicolumn{9}{c}{$X_3$}\\
         \hline
         \hline
        n & \multicolumn{8}{c}{t}  \\
         ~& 0 &0.1 &0.2 &0.3 &0.4 &0.5 &0.6 &0.7\\
         \hline
         512 & 0.018 & 0.016 & 0.024 & 0.132 &  0.131 & 0.121 & 0.198 & NA\\
         1024 & 0.015 & 0.015 &0.018 & 0.0034 & 0.067 & 0.076 & 0.098 & 0.054\\
         2048 & 0.004 & 0.004 & 0.006 & 0.008 & 0.061 & 0.045 & 0.052 & 0.045\\
         4096  & 0.001 & 0.002 & 0.003 & 0.006 & 0.008 & 0.023 & 0.038 &0.045\\
     \hline
     \hline
         \multicolumn{9}{c}{$X_4$}\\
         \hline
         \hline
         n & \multicolumn{8}{c}{t}  \\
         ~& 0 &0.1 &0.2 &0.3 &0.4 &0.5 &0.6 &0.7\\
         \hline
         512 & 0.070 & 0.074 & 0.095 & 0.155 & 0.177 & 0.199 & 0.191 & NA\\
         1024 & 0.004 & 0.006 & 0.003 &0.008 & 0.076 & 0.059 & 0.070  &0.068\\
         2048 & 0.004 & 0.003 & 0.004 & 0.009 & 0.057 & 0.066 & 0.057 & 0.054\\
         4096 & 0.002 &0.002 & 0.003 & 0.004 & 0.024 & 0.024 & 0.061 & 0.063
    \end{tabular}
    \caption{Mean estimated PIP of $X_3$ and $X_4$}
    \label{tab:PIP}
\end{table}

\begin{figure}
        \centering
        \includegraphics[width=0.5\linewidth]{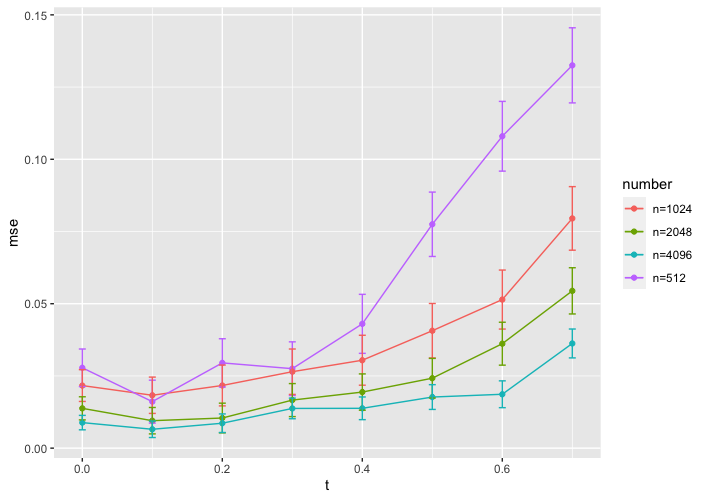}
        \caption{MSE of $\hat{h}$ across different sample size n and data set splits.}
        \label{fig:mse}
\end{figure}
    
\begin{figure}
    \centering
    \includegraphics[width=0.5\linewidth]{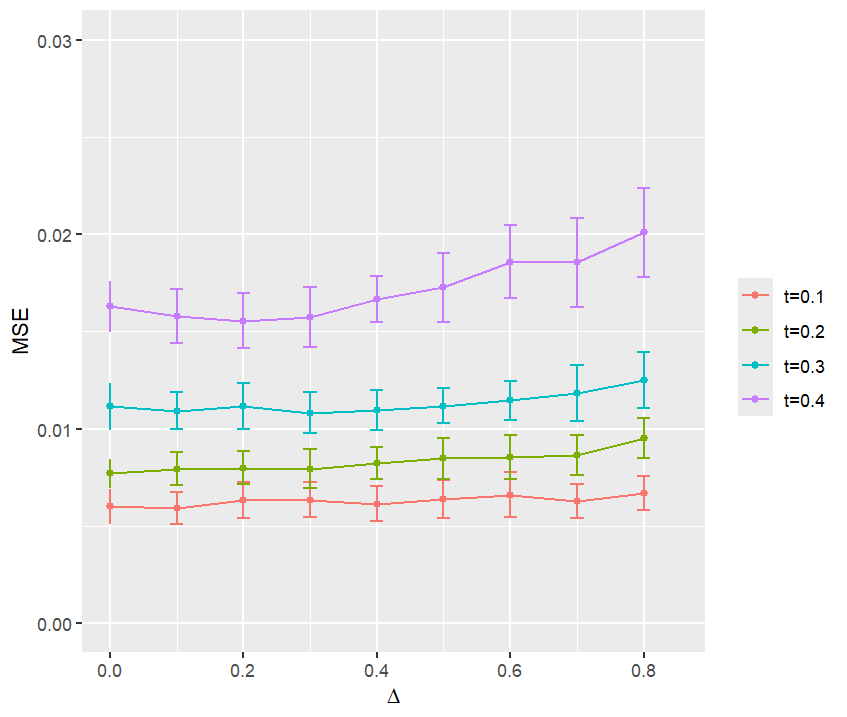}
    \caption{MSE of $\hat{h}$ across different data set splits and sample size variance parameter $\Delta$.}
    \label{fig:mse_split_h}
\end{figure}


\end{document}


\maketitle
\begin{abstract}
  This document contains the supplementary material to the paper
  ``Scalable Gaussian Process Regression Via Median Posterior Inference for Estimating Multi-Pollutant Mixture Health Effects".
\end{abstract}

\section{Proof of Theorem 1}\label{section: proof}
The proof for Theorem 1 uses the following results.
\begin{assumption}\label{assumption: A general}
Let $A$ be a random variable with positive support, the distribution of $A$ has a Lebesgue density $g$ such that 
\[
C_1a^{d_1}\exp\{-D_1a^q\log^{d_2} a\}\le g(a)\le C_2 a^{d_1}\exp\{-D_2a^q\log^{d_2} a\},
\]
for every large enough $a>0$ and constants $C_1,D_1,C_2,D_2>0$ and $d_1,d_2\ge0$. 
\end{assumption}
\begin{assumption}\label{assumption: subexponential tails}
Let $H$ be a Gaussian field, the associated spectral measure $\mu$ satisfies
\[
\int e^{\delta\|\lambda\|}\mu(\delta\lambda)<\infty,
\]
for some $\delta>0$. We say that $H$ has subexponential tails.
\end{assumption}
\begin{assumption}\label{assumption: decreasing Lebesgue measure}
Let $H$ be a Gaussian field, $H$ possesses a Lebesgue density $f$ such that $a\mapsto f(a\lambda)$ is decreasing on $(0,\infty)$ for every, $\lambda\in\mathbb{R}^q$.
\end{assumption}

For a random variable $A$ satisfying Assumption \ref{assumption: A general}, let $H^A=\{H_{A\bz}:\bz\in[0,1]^q\}$ be a centered rescaled Gaussian process. We consider the Borel measurable map in $C[0,1]^q$, equipped with the uniform norm $\|\cdot\|_\infty$. 

\begin{theorem}[Theorem 3.1 in \citep{van_der_Vaart_2009}]\label{theorem: scaled SE conditions}
Let $H$ be a centered homogeneous Gaussian field which satisfies Assumptions \ref{assumption: subexponential tails}, \ref{assumption: decreasing Lebesgue measure}, then there exist Borel measurable subsets $B_n$ of $C[0,1]^q$ and for sufficiently large $n$ and big enough constant $C_4$
\begin{align}\label{cond_contract}
\begin{split}
\log N\left(\bar\epsilon_n,B_n,\|\cdot\|_\infty\right)\le  n\bar\epsilon_n^2,\\
P\left(H^A\notin B_n\right)\le  e^{-4n\epsilon_n^2},\\
P\left(\|H^A-h_0\|_\infty< \epsilon_n\right)\ge e^{-n\epsilon_n^2},
\end{split}
\end{align}
where
\begin{itemize}
\item if $h_0\in C^\alpha[0,1]^q$ for $\alpha>0$ then $\epsilon_n=n^{-\alpha/(2\alpha+q)}\left(\log n\right)^{\kappa_1}$, $\bar\epsilon=C_4\epsilon_n\left(\log n \right)^{\kappa_2}$, for $\kappa_1=((1+q\lor d_2)/(2+q/\alpha)$ and $\kappa_2=(1+q-d_2)/2$,
\item if $h_0$ is the restriction of a function in $\mathcal{A}^{\gamma,r}(\mathbb{R}^q)$ to $[0,1]^q$ with spectral density satisfying $|f(\lambda)|\ge C_3\exp\{-D_3\|\lambda\|^\nu\}$ for some constants $C_3,D_3,\nu>0$, then $\bar\epsilon_n=\epsilon_n(\log n)^{(q+1)/2}$ and $\epsilon_n=C_4n^{-\frac{1}{2}}\left(\log n\right)^{(q+1)/2}$ if $r\ge\nu$, and $\epsilon_n=C_4n^{-\frac{1}{2}}\left(\log n\right)^{\left((q+1)/2+q/(2r)\right)}$ if $r<2.$
\end{itemize}
\end{theorem}

\begin{theorem}[Theorem 2.2 in \citep{van_der_Vaart_2009}]\label{theorem: scaled gaussian rate}
Let $H=\{H_{\bZ}:\bz\in[0,1]^q\}$ be the centered Gaussian process, with covariance function $\mathbb{E}[H_{\bZ}H_{\bZ'}]=\exp\{-\|\bZ-\bZ'\|_n^2 \}$. Also let $\sigma$ be Gamma-distributed random variable. We consider $H=\{H_{\sigma\bZ}:\bz\in[0,1]^q\}$ as a prior distribution for $h_0$. Then for every large enough $M$,
\[
\Pi_{h,\sigma}\left(\|h-h_0\|_n+|\sigma-\sigma_0|\ge M\epsilon|\bZ_1,\dots,\bZ_n\right)\rightarrow0\text{ as }n\rightarrow\infty,
\]
where
\[
  \epsilon =
  \begin{cases}
    n^{-\left(\frac{\alpha}{2\alpha+q}\right)}\left(\log n\right)^{\left(\frac{4\alpha+2}{4\alpha+2q}\right)} & \text{if } h_0\in\mathcal{C}^\alpha[0,1]^q, \\
    n^{-\frac{1}{2}}\left(\log n\right)^{\left(q+1+q/(2r)\right)} & \text{if }h_0\in\mathcal{A}^{\gamma,r}(\mathbb{R}^q) \text{ and }r<2,\\
    n^{-\frac{1}{2}}\left(\log n\right)^{\left(q+1\right)} & \text{if }h_0\in\mathcal{A}^{\gamma,r}(\mathbb{R}^d) \text{ and }r\ge2.
  \end{cases}
\]
\end{theorem}

\begin{theorem}[Theorem 7 in \citep{Minsker2017}]\label{theorem: wasserstein bound}
Let $\bZ_1,\dots,\bZ_{n_k}\sim P_0$ be an $i.i.d$ sample, and assume that $\epsilon_k>0$ $\Theta_k\subset\Theta$ are such that for some constant $\tilde C_1>0$
\begin{align}\label{modified conditions}
\begin{split}
&\log M(\epsilon_k,\Theta_k,\psi)\le n_k\epsilon_k^2,\\
&\Pi(\Theta\backslash \Theta_k)\le\exp\{-n_k\epsilon_k^2(\tilde C_1+4)\},\\
&\Pi\left(\theta:-P_0\left(\log\frac{p_\theta}{p_0}\right)\le\epsilon_k^2,P_0\left(\log\frac{p_\theta}{p_0}\right)^2\le\epsilon_k^2\right)\ge\exp\{-\tilde C_1n_k\epsilon_k^2\}.
\end{split}
\end{align}
Then there exists constants $R(\tilde C_1)$ and $\tilde C_2$ such that 
\[
P\left(d_{W_{1,\rho}}\left(\delta_0,\Pi_k(\cdot\mid \bZ_1,\dots,\bZ_K)\right)\ge R\epsilon_k+e^{-\tilde C_2 n_k \epsilon_k^2}\right)\le\frac{1}{n_k\epsilon_k^2}+4e^{-\tilde C_2 n_k\epsilon_k^2}.
\]
\end{theorem}
\begin{corollary}[Corollary 8 in \citep{Minsker2017}]\label{corollary: median bound}
Let $\bZ_1,\dots,\bZ_n\sim P_0$ be an $i.i.d.$ sample, and let $\hat\Pi_{n,g}$ be defined as in Theorem \ref{theorem: median rate}.  Under conditions \ref{modified conditions}, if $\epsilon_k$ satisfies $1/(n_k\epsilon_k^2)+4e^{-(1+\tilde C_2/2) n_k\epsilon_k^2/2}<\frac{1}{7}$, then
\[
\mathbb{P}\left(\left\|\delta_0-\hat\Pi_{n,g}\right\|_\mathcal{F}\ge1.52\left(R\epsilon_k+e^{-\tilde C_2 n_k \epsilon_k^2}\right)\right)<1.27^{-K}
\]
\end{corollary}
\begin{proof}(of Theorem 1):
If $\frac{1}{\rho^q} = A^q$ has a Gamma distribution, then Assumption \ref{assumption: A general} is satisfied for our model with $d_2=0$. Additionally, as $H$ is squared exponential Gaussian process, it is a density relative to the Lebesgue measure given by 
\[
\lambda\mapsto\frac{1}{2^q\pi^{q/2}}\exp\{-\|\lambda\|^2/4\}
\]
which has sub-exponential tails (see \citep{van_der_Vaart_2009}). Therefore, by Theorem \ref{theorem: scaled SE conditions} conditions \eqref{cond_contract} are satisfied for $H^A$ with $\epsilon_k =n_k^{-\alpha/(2\alpha+q)}\left(\log n_k\right)^{(4\alpha+q)/(4\alpha+2q)}$ if $h_0\in\mathcal{C}[0,1]^q$ and if $h_0\in\mathcal{A}^{\gamma,r}(\mathbb{R}^q)$, then 

\[
 \epsilon_k= 
  \begin{cases}
    n_k^{-\frac{1}{2}}\left(\log n_k\right)^{\left(q+1+q/(2r)\right)} & \text{if }r<2,\\
    n_k^{-\frac{1}{2}}\left(\log n_k\right)^{\left(q+1\right)} & \text{if }r\ge2.
  \end{cases}
\]
Note that \eqref{cond_contract} (from non-parametric GP contraction theory) map one-to-one to Conditions in \eqref{modified conditions} for the Hellinger metric $\psi$ (from the general median posterior theory ) (see \citep{VaartvanderA2008}), thus by Theorem \ref{theorem: wasserstein bound} with $\epsilon_k>0$ defined as above we have
\begin{align}\label{equation: prob. bound}
P\left(d_{W_{1,\rho}}\left(\delta_0,\Pi_k(\cdot\mid \bZ_1,\dots,\bZ_K)\right)\ge R\epsilon_k+e^{-\tilde C_2 k\epsilon_k^2}\right)\le\frac{1}{n_k\epsilon_k^2}+4e^{-\tilde C_2 n_k\epsilon_k^2}.
\end{align}
note that whether $h_0\in\mathcal{C}^\alpha[0,1]^q$ or $h_0\in\mathcal{A}^{\gamma,r}(\mathbb{R}^q)$ we can choose $k(n)$ such that $1/(n_k\epsilon_k^2)+4e^{-(1+\tilde C_2/2) n_k\epsilon_k^2/2}<\frac{1}{7}$. For example any $k\le n^{1/2}\log n$ would work well. Therefore, for any $\delta>0$, and fixed $k(n)$ using Corollary \ref{corollary: median bound} there is an $\epsilon_k(\delta)$ with a large enough $n$ such that
\[
\mathbb{P}\left(\left\|\delta_0-\hat\Pi_{n,g}\right\|_\mathcal{F}\ge1.52\left(R\epsilon_k+e^{-\tilde C_2 n_k \epsilon_k^2}\right)\right)<\delta.
\]
\end{proof}

\section{Details on Application to Boston Birth Weight Data}\label{appendix: birthweight data}
Each record consists of the outcome of interest which is the birth-weight of the newborn, confounders such as maternal age (years), maternal race (white, black, Asian, American Indian, other), maternal marital status (married, not married), maternal smoking during or before pregnancy (yes, no), maternal education (highest level of education attained: less than high school, high school, some college, college, advanced degree beyond college), parity (first-born, not first-born), maternal diabetes (yes, no), gestational diabetes (yes, no), maternal chronic high blood pressure (yes, no), maternal high blood pressure during pregnancy (yes, no), Kessner index of adequacy of prenatal care (adequate, intermediate, inadequate, no prenatal care), mode of delivery (vaginal, forceps, vacuum, first cesarean birth, repeat cesarean birth, vaginal birth after cesarean birth), clinical gestational age (weeks), year of birth (one of 2001–2012), season of birth (spring, summer, autumn, winter), date of birth,  newborn sex (male, female), Ozone concentration, Normalized Difference Vegetation Index (NDVI),  Medicaid-supported prenatal care (yes, no). Finally pollution exposure measures are concentration of PM$_{2.5}$ and four major chemical constituents of it: elemental carbon (EC), organic carbon (OC), nitrate, and sulfate. After excluding the observation records with missing data, the final sample with size equal to $n=685,857$ is used for our model illustration. We treated normalized Ozone , NDVI, PM$_{2.5}$, EC, OC, nitrate, and sulfate as mixture components for non-parametric parts, and other variables as covariates. For the date of birth within one year, in order to control the temporal effect on birth weight, we implement a cosine transformation on it, with birth date on January $1^{st}$ has highest positive effect on birth weight, and June $15^{th}$ has lowest negative effect on the birth weight. The model used is (1) in the main text. In our main analysis, we scaled the estimated effects per a standard deviation increase per each pollutant, which is more representative of a real world scenario than mass scaling.
\begin{figure}
    \centering
    \includegraphics[width=0.5\linewidth]{Figures/corr_pollution.eps}
    \caption{Correlation matrix of Boston birth weight data for 7 mixture component.}
    \label{fig:corr_pollution}
\end{figure}

\section{Additional Simulation Result}\label{appendix: additional simulation results}
\subsection{Performance of method with correlated exposures}    
In order to extend the method to a more realistic exposure setting where exposures are highly correlated, we consider that $z_1$ and $z_3$ are highly correlated with a correlation $\rho_{13}=0.8$; in the meantime, correlation between $z_2$ and $z_3$ is set as $\rho_{23}=0.3$; correlation between $z_1$ and $z_2$ is set as $\rho_{12}=0.1$. We applied the same analysis pipeline as in the independent-exposure setting in section 4.1 in the main text, and report results in Figure \ref{fig:h_corr} and \ref{fig:h_mse_corr}. It can be seen that compared to the result without exposure correlation, the inference performance of the method on $h$ is similar. However, the empirical standard errors for all metrics are larger than those in the independent-exposure setting, as expected in the presence of multicollinearity.We also observe that component-wise variable selection can become unstable when exposures are highly correlated, leading to increased variability in PIP estimates. As a potential extension, we note that hierarchical/group-level selection could mitigate this issue within our framework. Finally, the computation–precision trade-off remains evident, with the optimal choice of $t \xrightarrow{}1/2$ as n increases.
    \begin{figure}[h]
   \centering
   \includegraphics[width=0.8\linewidth]{Figures/h_res.png}
    \caption{Regression summary results for ${\bf h} = \gamma_0 + \gamma_1 {\bf\hat h}$ across different sample size $n$ and data set splits under correlated exposure setting. The setting of number of subsets are described above as $n^t$. We show (A) intercept: $\hat\gamma_0$, (B) slope: $\hat\gamma_1$. } 
    \label{fig:h_corr}
\end{figure}

\begin{figure}[h]
    \centering
    \includegraphics[width=0.8\linewidth]{Figures/res_h_mse.png}
    \caption{(A)Regression $R^2$ for ${\bf h} = \gamma_0 + \gamma_1 {\bf\hat h}$ and (B)MSE of ${\bf\hat h}$ for the proposed method across different sample size $n$ and data set splits under correlated exposure setting. The setting of number of subsets are described above as $n^t$. }
    \label{fig:h_mse_corr}
\end{figure}

\subsection{Performance of method with extreme uneven partition}
Now we would like to discuss a more extreme setting of an uneven partitioning. Following the simulation setting in section 4.2, we now consider that the sample size for each subset is randomly sampled from $[n^{(1-t-\Delta)}, \min(n^{(1-t+\Delta)},n)]$, with $\Delta \in \{0,0.1,0.2,0.3\}$. In this setting, the size of the subset varies in an exponential order. As shown in Table \ref{tab:mse_delta}, the MSE of $\hat{h}$ increases when $\Delta$ increases, while it remains at a consistent range when the minimal possible sample size of the subset $n^{(1-t-\Delta)}$ is greater than $n^{1/2}$. Once sample sizes of some of the subsets are less than $n^{1/2}$, the inference performance of the method drops dramatically. This simulation result is consistent with our theoretical finding. 
\begin{table}[!htp]
    \centering
    \begin{tabular}{ccccc}
        t & \multicolumn{4}{c}{$\Delta$}  \\
         ~& 0 &0.1 &0.2 &0.3\\
         \hline
         0.1 & 0.0064(0.0007) & 0.0067(0.0007) & 0.0092(0.0028) & 0.0124(0.0047)\\
         0.2 & 0.0077(0.0008) & 0.0085(0.0015) & 0.0120(0.0031) & 0.0162(0.0066) \\
         0.3 & 0.0104(0.0009) & 0.0125(0.0017) & 0.0219(0.0055) & 0.0355(0.0101) \\
         0.4 & 0.0155(0.0012) &0.0204(0.0020) & 0.0418(0.0083) & 0.1426(0.0477) 
    \end{tabular}
    \caption{MSE of $\hat{h}$ across different data set splits and sample size variance parameter $\Delta$.}
    \label{tab:mse_delta}
\end{table}

\begin{table}
    \centering
    \begin{tabular}{ccccccccc}
      \hline
      \hline
         \multicolumn{9}{c}{$X_3$}\\
         \hline
         \hline
        n & \multicolumn{8}{c}{t}  \\
         ~& 0 &0.1 &0.2 &0.3 &0.4 &0.5 &0.6 &0.7\\
         \hline
         512 & 0.018 & 0.016 & 0.024 & 0.132 &  0.131 & 0.121 & 0.198 & NA\\
         1024 & 0.015 & 0.015 &0.018 & 0.0034 & 0.067 & 0.076 & 0.098 & 0.054\\
         2048 & 0.004 & 0.004 & 0.006 & 0.008 & 0.061 & 0.045 & 0.052 & 0.045\\
         4096  & 0.001 & 0.002 & 0.003 & 0.006 & 0.008 & 0.023 & 0.038 &0.045\\
     \hline
     \hline
         \multicolumn{9}{c}{$X_4$}\\
         \hline
         \hline
         n & \multicolumn{8}{c}{t}  \\
         ~& 0 &0.1 &0.2 &0.3 &0.4 &0.5 &0.6 &0.7\\
         \hline
         512 & 0.070 & 0.074 & 0.095 & 0.155 & 0.177 & 0.199 & 0.191 & NA\\
         1024 & 0.004 & 0.006 & 0.003 &0.008 & 0.076 & 0.059 & 0.070  &0.068\\
         2048 & 0.004 & 0.003 & 0.004 & 0.009 & 0.057 & 0.066 & 0.057 & 0.054\\
         4096 & 0.002 &0.002 & 0.003 & 0.004 & 0.024 & 0.024 & 0.061 & 0.063
    \end{tabular}
    \caption{Mean estimated PIP of $X_3$ and $X_4$}
    \label{tab:PIP}
\end{table}

\begin{figure}
        \centering
        \includegraphics[width=0.5\linewidth]{Figures/mse_tot.png}
        \caption{MSE of $\hat{h}$ across different sample size n and data set splits.}
        \label{fig:mse}
\end{figure}
    
\begin{figure}
    \centering
    \includegraphics[width=0.5\linewidth]{Figures/mse_h.png}
    \caption{MSE of $\hat{h}$ across different data set splits and sample size variance parameter $\Delta$.}
    \label{fig:mse_split_h}
\end{figure}

\bibliographystyle{biom}
\bibliography{paper-ref.bib}